\newcommand{\titel}{Construction Sequences and Certifying 3-Connectedness}
\newcommand{\BG}{BG\xspace}
\begin{document}
	\title{\titel}
	\author{J. M. Schmidt}{Jens M. Schmidt}
	\address{Dept. of Computer Science, Freie Universit\"at, Berlin, Germany}
	\email{jens.schmidt@inf.fu-berlin.de}
	\thanks{This research was supported by the Deutsche Forschungsgemeinschaft within the research training group ``Methods for Discrete Structures'' (GRK 1408).}
	\keywords{Algorithms and data structures, construction sequence, 3-connected, certifying algorithm, Tutte contraction, removable edges, \emph{ACM classification}: F.2.2;G.2.2}

\begin{abstract}

Tutte proved that every $3$-connected graph on more than $4$ nodes has a \emph{contractible edge}. Barnette and Gr\"unbaum proved the existence of a \emph{removable edge} in the same setting. We show that the sequence of contractions and the sequence of removals from $G$ to the $K_4$ can be computed in $O(|V|^2)$ time by extending Barnette and Gr\"unbaum's theorem. As an application, we derive a certificate for the $3$-connectedness of graphs that can be easily computed and verified.
\end{abstract}
\maketitle



\section{Introduction}

Instead of dealing with contractions or removals in a $3$-connected graph $G = (V,E)$ we take the equivalent view of starting with the complete graph on four vertices $K_4$ and applying their inverse operations until $G$ is constructed. Such a sequence is called a \emph{construction sequence} of $G$. We will define contractions, removals and their inverse operations in Section~\ref{constructionSequences}.

Although existence theorems on contractible and removable edges are used frequently in graph theory~\cite{Tutte1966,Thomassen1981,Thomassen2006}, we are not aware of any computational results to find the whole construction sequence, except when contractions and removals are allowed to intermix~\cite{Albroscheit2006}. Moreover, efficient algorithms are unlikely to be derived from the existence proofs as they, e.\,g., in the case of Barnette and Gr\"unbaum, depend heavily on adding longest paths, which are NP-hard to find. In contrast, we show that it is possible to find a construction sequence for a graph $G$ in time $O(|V|^2)$ for Barnette and Gr\"unbaum's characterization, at the expense of having parallel edges in intermediate graphs. In addition, we show that Barnette and Gr\"unbaum's sequence can be transformed in linear time to Tutte's sequence of contractions and is therefore algorithmically at least as powerful. Both algorithms do not rely on the $3$-connectedness test of Hopcroft and Tarjan~\cite{Hopcroft1973}, which runs in linear time but is rather involved.

Blum and Kannan~\cite{Blum1989} introduced the concept of \emph{certifying algorithms}, which give an easy-to-verify proof of correctness along with their output. While being important for program verification, certifying algorithms provide often new insights into a problem, which can lead to new methods. For that reasons they are a major goal for problems on which the fast solutions known are complicated and difficult to implement. Testing a graph on $3$-connectedness is such a problem, but surprisingly few work has been devoted to certifying algorithms, although a sophisticated linear-time algorithm without certificates is known for over $35$ years~\cite{Hopcroft1973, Vo1983, Vo1983a}. In fact, we are aware of only one certifying algorithm for that problem~\cite{Albroscheit2006}, which runs in quadratic time, but is quite involved. Using construction sequences, we give a simple, alternative solution with running time $O(|V|^2)$ and show that the used certificate is easy to verify in time $O(|E|)$.

We first recapitulate well-known results on the existence of construction sequences in Sections~\ref{tuttecharacterization} and~\ref{barnettecharacterization} and point out how Tutte's sequence can be obtained from Barnette and Gr\"unbaum's sequence in linear time. Sections~\ref{identifying} and~\ref{prescribing} cover the main idea for the existence result that we use for computing Barnette and Gr\"unbaum's sequence. Section~\ref{representations} deals with the question how construction sequences are efficiently represented and Section~\ref{certifying} shows how to use construction sequences for a certifying $3$-connectedness test.

\section{Construction Sequences}\label{constructionSequences}
Let $G=(V,E)$ be a finite graph with $n := |V|$, $m := |E|$, $V(G) = V$ and $E(G) = E$. A graph is \emph{connected} if there is a path between any two nodes and \emph{disconnected} otherwise. For $k \geq 1$, a graph is \emph{$k$-connected} if $n > k$ and deleting every $k-1$ nodes leaves a connected graph. A node (a pair of nodes) that leaves a disconnected graph upon deletion is called a \emph{cut vertex} (a \emph{separation pair}). Note that $k$-connectedness does not depend on parallel edges nor on self-loops. A path leading from node $v$ to node $w$ is denoted by $v \rightarrow w$. For a node $v$ in a graph, let $N(v) = \{w \mid vw \in E\}$ denote its set of neighbors and $deg(v)$ its degree. For a graph $G$, let $\delta(G)$ be the minimum degree of its vertices.

A \emph{subdivision} of a graph replaces each edge by a path of length at least one. Conversely, we want a notation to get back to the graph without subdivided edges. If $deg(v)=2$, $|N(v)|=2$ and $v \notin N(v)$ for a graph $G$, let $\emph{smooth}_v(G)$ be the graph obtained from $G$ by deleting $v$ followed by adding an edge between its neighbors; we say $v$ is \emph{smoothed}. If one of the conditions is violated, let $\emph{smooth}_v(G) = G$. Let $\emph{smooth}(G)$ be the graph obtained by smoothing every node in $G$. For an edge $e \in E$, let $G \setminus e$ denote the graph obtained from $G$ by deleting $e$. Let $K_n$ be the complete graph on $n$ nodes.

The following are well-known corollaries of Menger's theorem~\cite{Menger1927}.



\begin{lemma}\label{fanlemma}\emph{(Fan Lemma)}
Let $v$ be a node in a graph $G$ that is $k$-connected with $k \geq 1$ and let $A$ be a set of at least $k$ nodes in $G$ with $v \notin A$. Then there are $k$ internally node-disjoint paths $P_1,\ldots,P_k$ from $v$ to distinct nodes $a_1,\ldots,a_k \in A$ such that for each of these paths $V(P_i) \cap A = a_i$.
\end{lemma}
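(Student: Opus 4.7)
The plan is to reduce the Fan Lemma to the vertex form of Menger's theorem by attaching an auxiliary apex. Construct $G'$ from $G$ by adding a new vertex $a^*$ and making it adjacent to every node of $A$. The first step is to verify that $G'$ remains $k$-connected. For any $S \subseteq V(G')$ with $|S| = k-1$: if $a^* \notin S$, then $G \setminus S$ is connected by $k$-connectedness of $G$, and since $|A| \geq k > |S|$ the apex $a^*$ still has a neighbor in $A \setminus S$, so $G' \setminus S$ is connected; if $a^* \in S$, then $G' \setminus S$ is obtained from $G$ by removing only $k-2$ vertices and is therefore connected. Combined with $|V(G')| = n + 1 > k$, this yields $k$-connectedness of $G'$.

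Since $v \notin A$, the vertices $v$ and $a^*$ are distinct and non-adjacent in $G'$, so the vertex form of Menger's theorem provides $k$ internally vertex-disjoint paths $Q_1, \ldots, Q_k$ from $v$ to $a^*$. For each $i$, the penultimate vertex $a_i$ of $Q_i$ lies in $A$, and the $a_i$ are pairwise distinct because the $Q_i$ share no internal vertex. Setting $P_i := Q_i - a^*$ gives $k$ internally disjoint $v$-to-$a_i$ paths in $G$. To enforce $V(P_i) \cap A = a_i$, the next step is to truncate each $P_i$ at its first vertex in $A$ encountered when traversing from $v$, obtaining a subpath $P_i'$ ending in some $a_i' \in A$; these $P_i'$ remain internally disjoint as subpaths of internally disjoint paths.

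The construction itself is routine once $G'$ is in hand. The main point requiring care is checking that the truncation preserves distinctness of the endpoints: if $a_i' = a_j'$ for $i \neq j$, then this common vertex lies on both $Q_i$ and $Q_j$ in $G'$, so by internal disjointness it must equal $v$ or $a^*$, which contradicts its membership in $A$. This step is exactly where the full strength of internal disjointness in the augmented graph is used, rather than disjointness only in the original graph $G$.
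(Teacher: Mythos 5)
Your proof is correct and is exactly the standard derivation from Menger's theorem that the paper has in mind: it states the Fan Lemma without proof as a ``well-known corollary of Menger's theorem,'' so your apex construction (add $a^*$ joined to all of $A$, apply vertex-Menger between $v$ and $a^*$, then truncate at the first $A$-vertex) fills in precisely the intended argument. Note that your verification that $G'$ stays $k$-connected is itself the paper's Expansion Lemma~\ref{expansionlemma}, which you could have cited directly instead of reproving.
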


\begin{lemma}\label{expansionlemma}\emph{(Expansion Lemma~\cite{West2001})}
Let $G$ be a $k$-connected graph. Then the graph obtained by adding a new node $v$ joined to at least $k$ nodes in $G$ is still $k$-connected.
\end{lemma}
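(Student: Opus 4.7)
The plan is to verify both defining conditions for $k$-connectedness of $G' := G + v$ directly, using only the definition together with the $k$-connectedness of $G$. The size condition is immediate: since $G$ is $k$-connected we have $|V(G)| > k$, hence $|V(G')| = |V(G)| + 1 > k$. For the connectivity condition I would take an arbitrary subset $S \subseteq V(G')$ with $|S| \leq k-1$ and show that $G' \setminus S$ is connected, splitting into two cases according to whether $v \in S$.

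If $v \in S$, then $S' := S \setminus \{v\}$ is a subset of $V(G)$ of size at most $k-2$, so $G \setminus S'$ is connected by the $k$-connectedness of $G$; since $G' \setminus S$ equals $G \setminus S'$ (the vertex $v$ is removed along with its incident edges), we are done. If $v \notin S$, then $S \subseteq V(G)$ with $|S| \leq k-1$, so $G \setminus S$ is connected. To conclude that $G' \setminus S$ is connected, it suffices to show that $v$ still has a neighbor in $G \setminus S$; but $v$ has at least $k$ neighbors in $G$ by hypothesis and $|S| \leq k-1$, so by pigeonhole at least one neighbor of $v$ survives the deletion, and $v$ is connected to $G \setminus S$ through that neighbor.

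There is no real obstacle here; the argument is essentially a case distinction coupled with a pigeonhole step. The only point requiring a small amount of care is the case $v \notin S$, where one must explicitly invoke the assumption that $v$ has at least $k$ neighbors (exactly $k$ suffices) in order to guarantee that $v$ remains attached to the rest of the graph after removing up to $k-1$ vertices.
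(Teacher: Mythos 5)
Your argument is correct and is the standard textbook proof of this fact; both cases are handled properly, and the pigeonhole step in the case $v \notin S$ is exactly the point where the hypothesis of at least $k$ neighbors is needed. The paper itself gives no proof, citing the lemma as a well-known corollary of Menger's theorem from West's textbook, so there is nothing to compare against beyond noting that your direct definitional argument is the expected one.
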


\subsection{Tutte's Characterization and their Inverse}\label{tuttecharacterization}
From now on we assume for simplicity that our input graph $G=(V,E)$ is simple although all results can be extended to multigraphs. Generally, contractions cannot always avoid parallel edges in intermediate graphs, e.\,g., for wheels. That is why we define contractions to preserve graphs to be simple: \emph{Contracting} an edge $e=xy$ in a graph deletes $e$, identifies nodes $x$ and $y$ and replaces iteratively all $2$-cycles by an edge. An edge $e$ is called \emph{contractible} if contracting $e$ results in a $3$-connected graph.

A \emph{node splitting} takes a node $v$ of a $3$-connected graph, replaces $v$ by two nodes $x$ and $y$ with an edge between them and replaces every former edge $uv$ that was incident to $v$ with either the edge $ux$, $uy$ or both such that $|N(x)| \geq 3$ and $|N(y)| \geq 3$ in the new graph. Node splitting as defined here is therefore the exact inverse of contracting a contractible edge that has on both endnodes at least $3$ neighbors.

\begin{theorem}\label{tutte}\emph{(Corollary of Tutte~\cite{Tutte1961})}
The following statements are equivalent:
\begin{align}
	&\ \text{A simple graph $G$ is $3$-connected}\notag\\
	\Leftrightarrow &\ \exists \text{ sequence of contractions from $G$ to $K_4$ on contractible edges $e=xy$}\notag\\
	&\ \text{with $|N(x)| \geq 3$ and $|N(y)| \geq 3$}\label{contractions}\\
	\Leftrightarrow &\ \exists \text{ construction sequence from $K_4$ to $G$ using node splittings}\label{nodesplittings}
\end{align}
\end{theorem}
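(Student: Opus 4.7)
The plan is to establish the three-way equivalence by showing that the contraction-sequence formulation and the node-splitting formulation are equivalent to each other by definition, and then closing the cycle with the 3-connectedness statement on one end. The equivalence of \eqref{contractions} and \eqref{nodesplittings} is immediate from the definitions given just before the theorem: a legal node splitting with $|N(x)|, |N(y)| \geq 3$ was defined as the exact inverse of the contraction of an edge $xy$ with the same degree condition, so reversing a sequence of one type gives a sequence of the other type.

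For \eqref{nodesplittings}$\Rightarrow$3-connectedness, I would induct on the length of the splitting sequence, with $K_4$ as the 3-connected base case. The inductive step reduces to showing that a legal node splitting preserves 3-connectedness. Assume $H$ is 3-connected and $H'$ is obtained by splitting $v$ into $x, y$; suppose for contradiction that $H'$ has a cut $S$ of size at most two. I would analyse by $|S \cap \{x, y\}|$: when $S \cap \{x, y\} = \emptyset$, the edge $xy$ keeps $x$ and $y$ in the same component of $H' - S$, so contracting $xy$ exposes $S$ as a cut in $H$; when $|S \cap \{x, y\}| = 1$, the set $(S \setminus \{x, y\}) \cup \{v\}$ is a cut in $H$ of size at most two, where the degree bounds $|N(x)|, |N(y)| \geq 3$ together with the 3-connectedness of $H$ rule out singleton components that could otherwise spoil the reduction; when $S = \{x, y\}$, the vertex $v$ is a cut vertex of $H$. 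Each subcase contradicts the 3-connectedness of $H$.

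For 3-connectedness$\Rightarrow$\eqref{contractions}, I would induct on $|V(G)|$. The base $|V(G)| = 4$ forces $G = K_4$, for which the empty sequence suffices. For $|V(G)| > 4$, I would invoke Tutte's classical existence theorem~\cite{Tutte1961}: every 3-connected simple graph on more than four vertices contains a contractible edge $e = xy$. Since $G$ is 3-connected and simple, every vertex has degree at least three, so the required condition $|N(x)|, |N(y)| \geq 3$ is automatic. Contracting $e$ yields a 3-connected simple graph on $|V(G)| - 1$ vertices; by induction it admits a contraction sequence to $K_4$, which I extend by prepending the contraction of $e$.

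The main obstacle lies in Tutte's existence theorem invoked in the last step; its proof is a delicate extremal argument which assumes to the contrary that every edge $xy$ lies in some separating triple $\{x, y, z\}$, selects such a triple whose removal maximises a prescribed fragment, and extracts a contradiction from the structural rigidity of the extremal small side. Because the present theorem is explicitly phrased as a corollary of Tutte's 1961 paper, I would use the existence of a contractible edge as a black box rather than reprove it; the remaining steps then reduce to the routine case analyses sketched above.
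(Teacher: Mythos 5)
Your proposal is correct. The paper gives no proof of this theorem at all---it is stated as a corollary of Tutte's wheel theorem with a citation to~\cite{Tutte1961}---and your argument supplies the standard derivation in exactly the spirit the paper intends: treat the existence of a contractible edge as a black box, observe that $|N(x)|,|N(y)|\geq 3$ is automatic in a simple $3$-connected graph, and note that node splitting is by the paper's own definition the exact inverse of such a contraction. Your case analysis for why a legal splitting preserves $3$-connectedness (in particular, using the degree bounds to exclude a singleton component $\{y\}$ when $|S\cap\{x,y\}|=1$) is the one nontrivial point, and you handle it correctly.
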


We describe next a straight-forward $O(n^2)$ algorithm to compute~\eqref{contractions} for a graph $G$ on more than $4$ vertices. First, we decrease the number of edges to $O(n)$ in $G$ by applying the algorithm of Nagamochi and Ibaraki~\cite{Nagamochi1992}. This preserves the $3$-connectedness or respectively, the non $3$-connectedness of $G$. Moreover, it is known that the resulting graph contains a vertex $v$ of degree $3$. By a result of Halin~\cite{Halin1969a}, every node of degree $3$ is incident to a contractible edge $e$. We get $e$ by subsequently contracting each of the three incident edges and testing the resulting graph with the algorithm of Hopcroft and Tarjan~\cite{Hopcroft1973} for $3$-connectedness. Iteration of both subroutines gives us the whole contraction sequence in $O(n^2)$ time. However, the Hopcroft-Tarjan test is difficult to implement and we will give a much simpler algorithm that is capable of computing both characterizations later.

\subsection{Barnette and Gr\"unbaum's Characterization and their Inverse}\label{barnettecharacterization}
The Barnette and Gr\"unbaum operations (\emph{\BG-operations}) consist of the following operations on a $3$-connected graph (see Figures~\ref{fig:BG1}-\ref{fig:BG3}).

\begin{enumerate}[label=(\alph*)]
	\item add an edge $xy$ (possibly a parallel edge)
	\label{operation1}
	\item subdivide an edge $ab$ by a node $x$ and add the edge $xy$ for a node $y \notin \{a,b\}$
	\label{operation2}
	\item subdivide two distinct, non-parallel edges by nodes $x$ and $y$, respectively, and add the edge $xy$
	\label{operation3}
\end{enumerate}

In all three cases, let $xy$ be the edge that was \emph{added} by the \BG-operation.

\begin{figure}[htb]
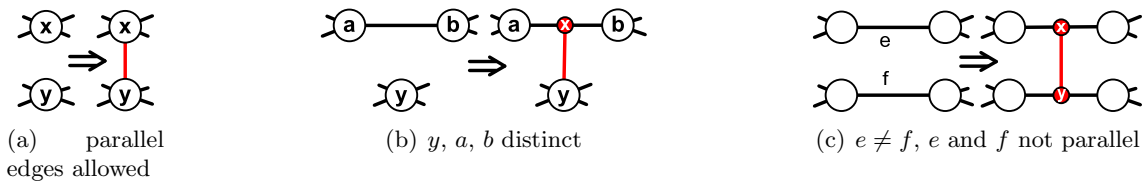

	\centering
	\subfigure[parallel edges allowed]{
		\includegraphics[scale=0.6]{./Pictures/EdgeAddition1}
		\label{fig:BG1}
	}
	\hfill
	\subfigure[$y$, $a$, $b$ distinct]{
		\includegraphics[scale=0.6]{./Pictures/EdgeAddition2}
		\label{fig:BG2}
	}
	\hfill
	\subfigure[$e \neq f$, $e$ and $f$ not parallel]{
		\includegraphics[scale=0.6]{./Pictures/EdgeAddition3}
		\label{fig:BG3}
	}
	\caption{The three operations of Barnette and Gr\"unbaum.}
	\label{fig:BGOperations}
\end{figure}

%

\begin{theorem}\label{barnettetheorem}\emph{(Barnette and Gr\"unbaum~\cite{Barnette1969}, Tutte~\cite{Tutte1966})}
A graph $G$ is $3$-connected if and only if $G$ can be constructed from the $K_4$ using \BG-operations.
\end{theorem}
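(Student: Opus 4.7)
The plan is to prove each direction separately. The forward direction (``constructible from $K_4$ implies $3$-connected'') goes by induction on the length of the construction sequence; the converse goes by induction on $|V(G)|$. All the real content sits in the converse.

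For the forward direction, the base case is $K_4$, which is $3$-connected. For the inductive step I would verify that each \BG-operation preserves $3$-connectedness. Operation~\ref{operation1} only adds an edge, so connectivity cannot decrease. For operations~\ref{operation2} and~\ref{operation3} I would decompose the step: first add a new degree-$3$ vertex with three prescribed neighbors (using the Expansion Lemma to stay $3$-connected), then delete the old edge $ab$ (respectively $e$ and $f$) whose subdivision introduced the new vertex. The point to check is that this final edge deletion cannot create a $2$-cut $\{u,v\}$: any path across such a would-be cut that used the deleted edge in the intermediate graph can be rerouted through the subdivision vertex, reducing us to the case that a subdivision vertex lies in $\{u,v\}$; the three internally disjoint $a$--$b$ paths guaranteed by Menger's theorem in the original $3$-connected graph then rule out the remaining possibilities.

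For the converse, the base case is $G=K_4$. For the inductive step it suffices to exhibit a \emph{removable edge} in any $3$-connected $G$ with $|V(G)|>4$, namely an edge $e=xy$ such that applying the inverse of one of~\ref{operation1}--\ref{operation3} to $e$ yields a $3$-connected graph $G'$, possibly with parallel edges, on strictly fewer vertices. The three inverse operations are: delete $e$; delete $e$ and smooth $x$ (when $deg(x)=3$); delete $e$ and smooth both $x$ and $y$ (when both have degree $3$). Given such an $e$, the induction hypothesis yields a construction sequence for $G'$ from $K_4$, which we extend by the \BG-operation that recreates $e$ to obtain one for $G$.

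Everything therefore hinges on the existence of a removable edge. I would follow Barnette and Gr\"unbaum's original strategy here: choose $e$ based on a longest path (or longest cycle) $P$ in $G$, and analyze the inverse operations applied to an edge near an endpoint of $P$. Using the Fan Lemma at that endpoint, any purported separating pair in the smaller graph can be used to extend $P$, contradicting maximality. This existence claim is the main obstacle: the three inverse operations interact with the local degree structure at $x$ and $y$ in different ways, so a naive case distinction proliferates, and the longest-path extremal choice — what actually collapses the cases — is inherently non-constructive. That is exactly why, as the introduction observes, an efficient algorithm cannot be read off from this proof and must be developed separately in the body of the paper.
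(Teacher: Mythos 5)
The paper does not actually prove this theorem: it is imported verbatim from Barnette--Gr\"unbaum and Tutte, and the paper only later reproves a prescribed-subdivision variant (Theorems~\ref{multipleconstruction} and~\ref{constructions}) at the cost of allowing non-basic operations. So your attempt must stand on its own, and measured that way it has one cosmetic flaw and one genuine gap. The cosmetic flaw first: your induction measure for the converse is wrong. The inverse of operation~\ref{operation1} deletes an edge without deleting any vertex (e.g.\ $K_5 \mapsto K_5 \setminus e$), so induction on $|V(G)|$ does not terminate; you need to induct on $|E(G)|$ (every \BG-operation strictly increases the edge count) or on $|V|+|E|$. Your forward direction is essentially fine: decomposing operations~\ref{operation2} and~\ref{operation3} into an Expansion-Lemma step followed by edge deletions, and checking that each deletion creates no $2$-cut, is a standard and correct plan, though the details for operation~\ref{operation3} (two new vertices, three edges to delete) are glossed.

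The genuine gap is that the entire content of the hard direction --- the existence, in every $3$-connected $G \neq K_4$, of an edge whose removal is the exact inverse of one of operations~\ref{operation1}--\ref{operation3} and leaves a $3$-connected graph --- is asserted rather than proved, and the strategy you defer to does not deliver it in the form you need. Barnette and Gr\"unbaum's longest-path argument runs in the opposite direction: one fixes an edge-maximal subdivision $H \subseteq G$ of a $3$-connected graph and shows that a longest \BG-path for $H$ can be added without creating parallel links, building $G$ up from a $K_4$-subdivision rather than tearing $G$ down edge by edge. Recasting this as ``find a removable edge near the endpoint of a longest path of $G$'' is a different and genuinely delicate statement: Figure~\ref{fig:Degree3NodeWithoutRemovableEdge} exhibits a $3$-connected graph with a degree-$3$ vertex none of whose incident edges is removable, so the local degree structure really does obstruct naive choices, and beyond mere $3$-connectedness of the smaller graph you must also verify the side conditions that make the removal invertible by a \BG-operation (e.g.\ that the two edges into which $x$ and $y$ are smoothed are neither identical nor parallel --- exactly the conditions $|N(x)|\geq 3$, $|N(y)|\geq 3$, $|N(x)\cup N(y)|\geq 5$ handled in the proof of Theorem~\ref{removing}). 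You acknowledge that this case distinction ``proliferates'' and do not carry it out; as written, the proposal therefore reduces the theorem to an unproved existence claim that is essentially equivalent to the theorem itself.
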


Theorem~\ref{barnettetheorem} was proven in this notation by Barnette and Gr\"unbaum~\cite{Barnette1969}, but implicitly described in a theorem about \emph{nodal connectivity} by Tutte~\cite[Theorem~$12.65$]{Tutte1966}. If not stated otherwise, every construction sequence uses only \BG-operations. Let a \BG-operation be \emph{basic}, if it does not create parallel edges and let a construction sequence be \emph{basic}, if it only uses basic \BG-operations.

Like in Theorem~\ref{tutte}, we want the inverse of a \BG-operation. Let \emph{removing} the edge $e=xy$ of a graph be the operation of deleting $e$ followed by smoothing $x$ and $y$. An edge $e=xy$ in $G$ is called \emph{removable}, if removing $e$ yields a $3$-connected graph. We show that removing a removable edge $e=xy$ with $|N(x)| \geq 3$, $|N(y)| \geq 3$ and $|N(x) \cup N(y)| \geq 5$ is exactly the inverse of a \BG-operation.

\begin{theorem}\label{removing}
The following statements are equivalent:
\begin{align}
	&\ \text{A simple graph $G$ is $3$-connected}\label{triconnected}\\
	\Leftrightarrow &\ \exists \text{ sequence of removals from $G$ to $K_4$ on removable edges $e=xy$}\notag\\
	&\ \text{with $|N(x)| \geq 3$, $|N(y)| \geq 3$ and $|N(x) \cup N(y)| \geq 5$}\label{removals}\\
	\Leftrightarrow &\ \exists \text{ construction sequence from $K_4$ to $G$ using \BG-operations}\label{BGConstruction}\\
	\Leftrightarrow &\ \exists \text{ basic construction sequence from $K_4$ to $G$ using \BG-operations}\label{BGBasic}
\end{align}
\end{theorem}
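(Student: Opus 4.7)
The plan is to establish the cycle of equivalences via Theorem~\ref{barnettetheorem}, a stepwise correspondence between qualifying removals and \BG-operations, and a separate strengthened existence lemma to reach the basic version. The easy implications come first: $(1) \Leftrightarrow (3)$ is Theorem~\ref{barnettetheorem}, and $(4) \Rightarrow (3)$ is immediate, since every basic construction sequence is in particular a construction sequence.

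For $(2) \Leftrightarrow (3)$ I would pair each \BG-operation with the inverse of a single removal. Given an edge $e = xy$ satisfying $|N(x)|, |N(y)| \geq 3$ and $|N(x) \cup N(y)| \geq 5$, after deleting $e$ a smoothing occurs at $x$ iff $|N(x)| = 3$ and similarly at $y$, and the smoothing replaces the endpoint by an edge between its two other neighbors. Three cases arise. If no smoothing occurs ($|N(x)|, |N(y)| \geq 4$), the removal is plain deletion whose inverse is operation~\ref{operation1}. If only $x$ is smoothed, writing $N(x) = \{a,b,y\}$, the inverse subdivides the smoothed-in edge $ab$ by $x$ and adds $xy$; this is operation~\ref{operation2}. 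If both are smoothed with $N(x) = \{a,b,y\}$ and $N(y) = \{c,d,x\}$, the inverse subdivides $ab$ by $x$, subdivides $cd$ by $y$, and adds $xy$, i.e., operation~\ref{operation3}; operation~\ref{operation3} requires $ab$ and $cd$ to be distinct and non-parallel, and this is precisely where $|N(x) \cup N(y)| \geq 5$ is used, since with $|N(x)| = |N(y)| = 3$ a union of size at most four would force $\{a,b\} = \{c,d\}$. Conversely, each \BG-operation is undone by removing the edge it inserted, and one checks at once that the resulting edge meets all three conditions. Applying the correspondence term-by-term converts either sequence into the other, establishing $(2) \Leftrightarrow (3)$.

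The main obstacle is $(3) \Rightarrow (4)$, equivalently $(1) \Rightarrow (4)$, since only operation~\ref{operation1} can introduce parallel edges and thereby violate basicness. I would proceed by induction on $|V(G)|$, using the strengthened existence statement that every simple 3-connected graph on more than four vertices admits a removable edge $xy$ meeting the conditions of (2) whose removal additionally keeps the graph simple. Removal preserves simpleness iff, at each smoothed endpoint, say $|N(x)|=3$ with $N(x)=\{a,b,y\}$, the edge $ab$ inserted by the smoothing is not already present in $G$. The delicate point is to exclude the degenerate configuration in which every candidate removable edge forces such a duplication; I would attack this by adapting the Barnette--Gr\"unbaum existence argument with the help of the Fan Lemma and the Expansion Lemma to produce at least one removable edge avoiding the obstruction. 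Once this lemma is in hand the induction goes through: remove such an edge, apply the hypothesis to the smaller simple 3-connected graph, and prepend the corresponding basic \BG-operation given by the case analysis above. This closes the cycle; the stepwise correspondence is essentially routine, and the core technical work lies in the simpleness-preserving existence lemma.
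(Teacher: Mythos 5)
Your middle block --- the case analysis on how many endpoints of $e$ get smoothed, and the use of $|N(x)\cup N(y)|\ge 5$ to certify that the two subdivided edges in operation~\ref{operation3} are distinct and non-parallel --- is essentially the paper's argument for $\eqref{removals}\Rightarrow\eqref{BGConstruction}$. But there are two genuine problems. First, your claim that each \BG-operation is undone by a removal and that ``one checks at once that the resulting edge meets all three conditions'' is false for non-basic sequences: if operation~\ref{operation1} adds a parallel edge $xy$ to the $K_4$, then $|N(x)\cup N(y)|=4$, so the reversed step is not a qualifying removal. The paper therefore proves $\eqref{BGBasic}\Rightarrow\eqref{removals}$ rather than $\eqref{BGConstruction}\Rightarrow\eqref{removals}$, and it invokes basicness exactly once, to exclude this $|V|=4$ configuration (for $|V|>4$ a shared neighborhood $\{a,b\}$ would be a separation pair, so the condition holds automatically). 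Your cycle of implications still closes if you reroute through \eqref{BGBasic}, but the step as you state it does not hold. Relatedly, whether smoothing occurs at $x$ after deleting $e$ is governed by $deg(x)=|N(x)|=3$, not by $|N(x)|=3$ alone; intermediate graphs in \eqref{removals} may carry parallel edges from earlier removals.

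Second, and more seriously, the implication $\eqref{triconnected}\Rightarrow\eqref{BGBasic}$ is where all the difficulty lives, and your proposal does not prove it. Your ``strengthened existence lemma'' --- every simple $3$-connected graph on more than four vertices has a qualifying removable edge whose removal keeps the graph simple --- is true, but it carries the entire content of the direction being proved; ``adapting the Barnette--Gr\"unbaum existence argument'' is a plan, not an argument, and the required adaptation is precisely the published proof of Theorem~\ref{barnettetheorem}, which (as the paper observes) already yields basic sequences on simple graphs and is simply cited for this step. Note also that your induction measure is off: a removal whose inverse is operation~\ref{operation1} deletes an edge without deleting a vertex, so the induction must run on $|E|$ rather than on $|V|$.
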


\begin{proof}

Theorem~\ref{barnettetheorem} establishes $\eqref{triconnected} \Leftrightarrow \eqref{BGConstruction}$. Moreover, the proof of Theorem~\ref{barnettetheorem} in~\cite{Barnette1969} implicitly shows that on simple graphs basic operations suffice, thus only the equivalence for~\eqref{removals} remains. We first prove $\eqref{BGBasic} \Rightarrow \eqref{removals}$ and then $\eqref{removals} \Rightarrow \eqref{BGConstruction}$.

\BG-operations operate by definition on $3$-connected graphs, this holds in particular for the ones in~\eqref{BGConstruction}. Let $G'$ be the graph obtained by a basic \BG-operation in~\eqref{BGConstruction} that adds the edge $e=xy$. The operation can clearly be undone by removing $e$ in $G'$. Since \BG-operations preserve $3$-connectedness with Theorem~\ref{barnettetheorem}, $|N(x)| \geq 3$ and $|N(y)| \geq 3$ hold in $G'$.

It remains to show that $|N(x) \cup N(y)| \geq 5$ in $G'$. If $|N(x)| \geq 4$ or $|N(y)| \geq 4$, $|N(x) \cup N(y)| \geq 5$ follows, since $x$ and $y$ are neighbors and no self-loops exist. Thus, let $|N(x)| = |N(y)| = 3$. Having $N(x) \setminus \{y\} \neq N(y) \setminus \{x\}$ yields $|N(x) \cup N(y)| \geq 5$ as well, so let $N(x) \setminus \{y\}$ and $N(y) \setminus \{x\}$ contain the same two nodes $a$ and $b$. If $|V(G)| > 4$, $a$ or $b$ must be adjacent to a node $c$ that is neither adjacent to $x$ nor $y$. But then $\{a,b\}$ is a separation pair, contradicting the $3$-connectedness of $G$. On the other hand, $|V(G)|=4$ is not possible, since that implies the \BG-operation to be~\ref{operation1} (since only~\ref{operation2} and~\ref{operation3} create new vertices) and that is no basic operation on the $K_4$.

We prove $\eqref{removals} \Rightarrow \eqref{BGConstruction}$. Let $G'$ be the graph containing a removable edge $e=xy$ that is removed in~\eqref{removals}. Note that $G'$ can have parallel edges due to previous removals but no self-loops. The removal can be undone by one of the \BG-operations. Which one, is dependent on the number $i$ of endnodes of $e$ on which smoothing changed the graph, i.\,e., the number of endnodes $u$ of $e$ with $|N(u)|=deg(u)=3$ in $G'$. If $i=0$, removing $e$ just deletes $e$ which is inversed by operation~\ref{operation1}. For $i=1$, let $x$ be the node with $|N(x)|=deg(x)=3$ in $G'$ and $f$ be the edge in which $x$ was smoothed. Then~\ref{operation2} can be applied, because $y \notin f$ (see Figure~\ref{fig:check1}) since otherwise $x$ would have had only $2$ neighbors in $G'$, contradicting the assumption $|N(x)| \geq 3$.

If $i=2$, let $f_1$ and $f_2$ be the edges in which $x$ and $y$ were smoothed. Operation~\ref{operation3} can only be applied if $f_1$ and $f_2$ are neither identical (see Figure~\ref{fig:check2}) nor parallel. But $f_1 = f_2$ would again contradict $|N(x)| \geq 3$ in $G'$ and $f_1$ being parallel to $f_2$ would contradicts $|N(x) \cup N(y)| \geq 5$ in $G$, since in that case $x$ and $y$ are only adjacent to each other and the two nodes $f_1 \cap f_2$.
\end{proof}

We show that Barnette and Gr\"unbaum's characterization is algorithmically at least as powerful as Tutte's by giving a simple linear time transformation. Lemma~\ref{transformation} allows us to focus on computing \BG-operations only.

\begin{lemma}\label{transformation}
Every construction sequence using \BG-operations can be transformed in linear time to Tutte's sequence~\eqref{contractions} of contractions.
\end{lemma}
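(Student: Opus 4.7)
My plan is to process the given BG-sequence in reverse order, emitting zero, one, or two Tutte contractions per operation according as the operation has type~(a),~(b), or~(c). Counting vertices added to $K_4$ gives $\#(b) + 2\,\#(c) = |V|-4$, which matches the length of any Tutte contraction sequence for $G$, so there is room for a one-to-one correspondence between vertex-creating events in the BG-sequence and contractions in Tutte's sequence.

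For a type-(b) operation that subdivided an edge $ab$ by the new vertex $x$ and added the edge $xy$, I would emit the contraction of $xa$ (equivalently $xb$). Intuitively, this ``undoes the subdivision'' by merging $x$ back into $a$ and restoring the edge $ab$ through $x$'s edge to $b$. For a type-(c) operation that additionally subdivides $cd$ by $y$, the two contractions I would emit are $xa$ followed by $yc$, provided $\{a,b\}\cap\{c,d\}=\emptyset$; when the two subdivided edges share an endpoint, the rule is refined to avoid it (e.g.\ $xb, yd$ when $a=c$). Note that contracting the added edge $xy$ itself would be the wrong choice: a short example starting from $K_4$ and applying one type-(b) step shows that contracting $xy$ can produce $K_4$ minus an edge, which is not $3$-connected.

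To verify correctness I would prove by reverse induction on $i$ the invariant $V(H_i)=V(G_i)$ and $E(H_i)\supseteq E(G_i)$, where $G_i$ is the BG-state after its first $i$ operations and $H_i$ is the graph reached after our reverse sweep has processed all operations at positions $>i$. Since $G_i$ is $3$-connected by Theorem~\ref{barnettetheorem} and adding edges preserves $3$-connectedness, the invariant implies that $H_i$ is $3$-connected, which in turn gives the degree conditions $|N(x)|,|N(a)|\ge 3$ demanded by Tutte. In the type-(b) step, contracting $xa$ in $H_i$ produces a graph equal to $G_{i-1}$ plus at most the extra edge $ay$ and plus the extras inherited from $H_i$; hence the result is $3$-connected, which simultaneously establishes contractibility of $xa$ and the invariant for $H_{i-1}$.

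The main obstacle is the shared-endpoint case in type~(c): when $a=c$ (or $a=d$), contracting $xa$ would collapse the degree of $y$ to $2$ and invalidate the second contraction, since $y$ would then be adjacent to both endpoints of the contracted edge. The refined rule circumvents this, but its analysis must be carried through both contractions, re-indexing $c$ or $d$ if it has been merged in the first step. Aside from this case distinction, each BG-operation is processed in constant time, so the entire transformation runs in $O(|V|+|E|)$ time.
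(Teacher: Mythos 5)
Your plan coincides with the paper's proof in every structural respect: a reverse sweep emitting $0$, $1$ or $2$ contractions per operation, contracting halves of the subdivided edges rather than the added edge $xy$, choosing the halves away from a shared endpoint in case~\ref{operation3}, and the invariant that the leftover edges $xy$ only enlarge the edge set and therefore harm neither $3$-connectedness nor later contractions. The type-\ref{operation2} analysis is complete and correct.

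There is, however, one genuine gap, and it is not the one you single out. Your induction certifies $3$-connectedness only at the operation boundaries $H_i$, i.e.\ after \emph{both} contractions of a type-\ref{operation3} pair have been performed. But Tutte's sequence~\eqref{contractions} requires every single contraction to be on a contractible edge, so you must also show that the \emph{intermediate} graph between the two contractions of a pair is $3$-connected --- and this is needed in the disjoint-endpoint case just as much as in the shared-endpoint case, which you present as the only obstacle. Deferring it with ``its analysis must be carried through both contractions'' leaves precisely the step that requires an idea. The paper closes it with one observation: the intermediate graph is obtained from the graph after both contractions (which is $3$-connected by your invariant, being $G_{i-1}$ plus extra edges) by re-subdividing one edge and re-attaching $y$ to it, i.e.\ by a single operation~\ref{operation2}; since \BG-operations preserve $3$-connectedness by Theorem~\ref{barnettetheorem}, the intermediate graph is $3$-connected and the first contraction of the pair is indeed contractible. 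With that sentence added, your argument matches the paper's.
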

\begin{proof}
We transform every \BG-operation in reverse order of the construction sequence to $0$, $1$ or $2$ contractions each. Operation~\ref{operation1} yields no contraction while operation~\ref{operation2} yields the contraction of exactly one part of the subdivided edge (either $xa$ or $xb$ in Figure~\ref{fig:BGOperations}). For an operation~\ref{operation3}, let $e=ab$ and $f=vw$ be the edges that are subdivided with $x$ and $y$. Both edges share at most one node; let w.\,l.\,o.\,g. $a=v$ be that node if it exists. We create one contraction for each of the edges $xb$ and $yw$ in arbitrary order. In all cases, contractions inverse \BG-operations except for the added edge $xy$, which is left over. But additional edges do not harm the $3$-connectedness of the graph nor subsequent contractions. Thus, we have found a contraction sequence to the $K_4$ unless the first contraction in the case of an operation~\ref{operation3} yields at some point a graph $H$ that is not $3$-connected. But $H$ can be obtained from the graph that results from contracting the second edge by applying one operation~\ref{operation2} and therefore is $3$-connected.
\end{proof}

\subsection{Identifying Intermediate Graphs with Subdivisions in $G$}\label{identifying}
Let $K_4=G_0,G_1,\ldots,G_z=G$ be the $3$-connected graphs obtained in a construction sequence $Q$ to a simple $3$-connected graph $G$ using the basic \BG-operations $C_0,\ldots,C_{z-1}$. We can reverse $Q$ by starting with $G$ and removing the added edges of \BG-operations in reverse order. Suppose we would delete the added edge of every $C_i$ instead of removing it and treat emerging paths containing interior nodes of degree $2$ as (topological) edges in $G_i$ (see Figure~\ref{fig:graphsequences}). Then iteratively paths are deleted instead of edges being removed and we obtain the sequence of subdivisions $G=S_z,\ldots,S_0$ in $G$ with $S_0$ being a subdivision of the $K_4$. This leads to the following observation.

\begin{lemma}[Observation]\label{observation}
Let $Q$ be a construction sequence from a graph $G_0$ to $G$ using \BG-operations. Then $G$ contains a subdivision of $G_0$ that is specified by $Q$.
\end{lemma}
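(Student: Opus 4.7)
The plan is to induct on the length $z$ of $Q$. The base case $z = 0$ is trivial because then $G = G_0$ and every graph is a subdivision of itself.

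For the inductive step, consider the last operation $C_{z-1}$, which creates $G = G_z$ from $G_{z-1}$ by adding an edge $xy$ and, depending on its type, introducing zero, one, or two new degree-$2$ vertices on previously existing edges of $G_{z-1}$. The key sub-claim is that in all three cases the subgraph $G \setminus \{xy\} \subseteq G$ is itself a subdivision of $G_{z-1}$: for type~\ref{operation1} we have $G \setminus \{xy\} = G_{z-1}$ directly; for type~\ref{operation2}, the vertex $x$ has degree $2$ in $G \setminus \{xy\}$ and smoothing it yields $G_{z-1}$; for type~\ref{operation3}, smoothing both $x$ and $y$ in $G \setminus \{xy\}$ yields $G_{z-1}$.

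Apply the inductive hypothesis to the prefix $C_0, \ldots, C_{z-2}$ to obtain a subdivision $S' \subseteq G_{z-1}$ of $G_0$. Lift $S'$ to $G$ as follows: keep every vertex of $S'$; for every edge $e$ of $S'$, keep $e$ if it still exists in $G$, and otherwise replace $e$ by the two-edge path through the new vertex ($x$ or $y$) that subdivided $e$. Since a subdivision of a subdivision is itself a subdivision, the resulting subgraph $S \subseteq G$ is a subdivision of $G_0$; internal vertex-disjointness of the paths of $S$ is preserved because each of $x$ and $y$ lies in the interior of at most one lifted path.

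The construction uses only information derivable from $Q$ (the type and the added/subdivided edges of each $C_i$), so $S$ is indeed specified by $Q$ as required. The statement has the flavour of an observation rather than a theorem, and accordingly the main difficulty is purely bookkeeping: \BG-operations are defined precisely so that the post-operation graph contains the pre-operation graph as a subdivision, and the lemma follows by iterating this containment along $Q$.
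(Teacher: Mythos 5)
Your proof is correct and matches the paper's reasoning: the paper justifies this observation informally by reverse-deleting the added edges of the \BG-operations and treating the emerging degree-two paths as topological edges, which is exactly the containment you establish by forward induction on the length of $Q$. Your lifted subdivision coincides with the paper's $S_0$, so this is the same argument, merely formalized in the opposite direction.
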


In particular, Observation~\ref{observation} yields with Theorem~\ref{barnettetheorem} that every $3$-connected graph contains a subdivision of the $K_4$ (Theorem of J. Isbell~\cite{Barnette1969}). Each graph $G_i$ in our construction sequence can be identified with the unique subdivision $S_i$ contained in $G$. Conversely, $G_i = \emph{smooth}(S_i)$ for all $0 \leq i \leq z$, since smoothing a graph is exactly the inverse operation of subdividing a graph without nodes of degree two. The nodes $x$ in $S_i$ with $deg(x) \geq 3$ are called \emph{real} nodes, because they correspond to nodes in $G_i$. Real nodes have at least $3$ neighbors in $G_i$, because $G_i$ is $3$-connected.

\begin{figure}[htb]
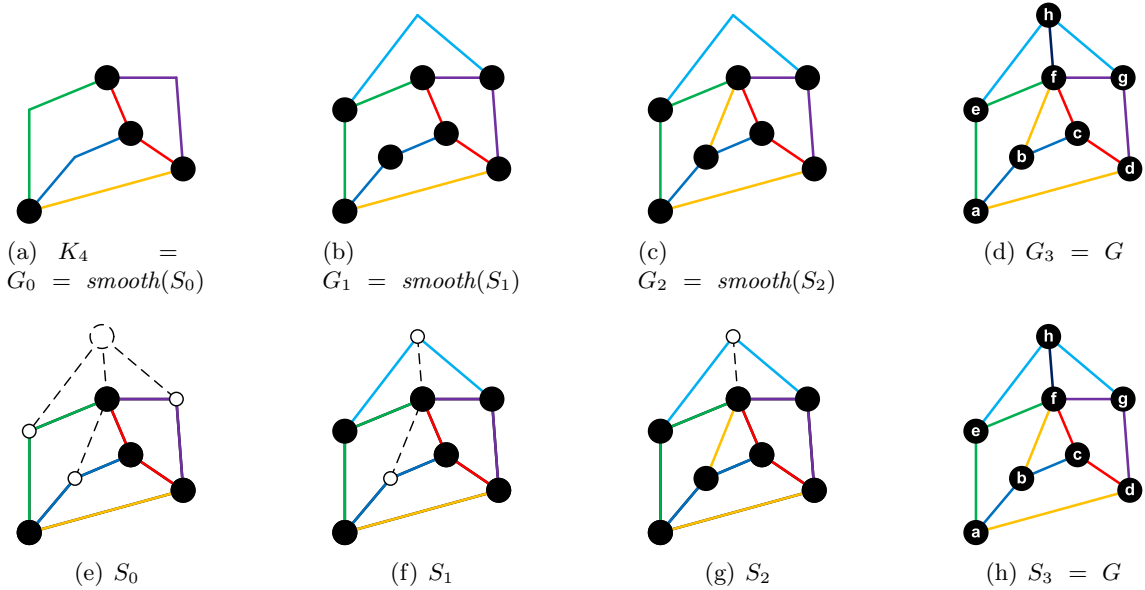

	\centering
	\subfigure[$K_4~=~~~~~$ $G_0~=~\emph{smooth}(S_0)$]{
		\includegraphics[scale=0.45]{./Pictures/Sequence1}
		\label{fig:G0}
	}
	\hfill
	\subfigure[$G_1~=~\emph{smooth}(S_1)$]{
		\includegraphics[scale=0.45]{./Pictures/Sequence2}
		\label{fig:G1}
	}
	\hfill
	\subfigure[$G_2~=~\emph{smooth}(S_2)$]{
		\includegraphics[scale=0.45]{./Pictures/Sequence3}
		\label{fig:G2}
	}
	\hfill
	\subfigure[$G_3~=~G$]{
		\includegraphics[scale=0.45]{./Pictures/Sequence4}
		\label{fig:G3}
	}
	\linebreak	
	\subfigure[$S_0$]{
		\includegraphics[scale=0.45]{./Pictures/SequenceSubdivision1}
		\label{fig:S0}
	}
	\hfill
	\subfigure[$S_1$]{
		\includegraphics[scale=0.45]{./Pictures/SequenceSubdivision2}
		\label{fig:S1}
	}
	\hfill
	\subfigure[$S_2$]{
		\includegraphics[scale=0.45]{./Pictures/SequenceSubdivision3}
		\label{fig:S2}
	}
	\hfill
	\subfigure[$S_3~=~G$]{
		\includegraphics[scale=0.45]{./Pictures/Sequence4}
		\label{fig:S3}
	}
	\caption{The graphs $G_0,\ldots,G_z$ and $S_0,\ldots,S_z$ of a construction sequence of $G$. On graphs $S_i$, the dashed edges and nodes are in $G$ but not in $S_i$ and nodes depicted in black are \emph{real} nodes. For example, the path $C_0 = e \rightarrow h \rightarrow g$ is a \emph{\BG-path} for $S_0$, yielding $S_1$. The \emph{links} of $S_1$ are the paths $C_0$, $a \rightarrow b \rightarrow c$ and the single edges $ae$, $ef$, $fc$, $cd$, $da$, $fg$, $gd$.}
	\label{fig:graphsequences}
\end{figure}

Note that in non-basic construction sequences $\emph{smooth}(S_i)$ can have parallel edges, although $S_i$ is always simple. We define the \emph{links} of each $S_i$ to be the unique paths in $S_i$ with only their endnodes being real. The links of $S_i$ partition $E(S_i)$ because $S_i$ is $2$-connected, has therefore minimum degree two and is not a cycle. Let two links be \emph{parallel} if they share the same endnodes.

\bigskip\bigskip\bigskip

\begin{definition}\label{bgpathdefinition}
A \emph{\BG-path for $S_i$} is a path $P = x \rightarrow y$ in $G$ with the following properties:
\begin{enumerate}
	\item $S_i \cap P = \{x,y\}$\label{bgpathdefinition1}
	\item $x$ and $y$ are not both contained in a link of $S_i$ except as endnodes\label{bgpathdefinition2}
	\item $x$ and $y$ are not inner nodes of links of $S_i$ that are parallel\label{bgpathdefinition3}
\end{enumerate}
\end{definition}

It is easy to see that every \BG-path for $S_i$ corresponds to a \BG-operation on $G_i$ and vice versa. We will exploit this duality in the next section.

In general, construction sequences are not bound to start with the $K_4$. Titov and Kelmans~\cite{Titov1975,Kelmans1978} extended Theorem~\ref{barnettetheorem} by proving the existence of a construction sequence even when starting with arbitrary $3$-connected graphs $G_0$ instead of the $K_4$, as long as a subdivision of $G_0$ is contained in $G$. This is a generalization, since every $3$-connected graph contains a subdivision of the $K_4$ by Observation~\ref{observation}.

\begin{theorem}\label{kelmans}\emph{~\cite{Kelmans1978,Titov1975}}
Let $G_0$ be a $3$-connected graph. Then a simple graph $G$ is $3$-connected and contains a subdivision of $G_0$ if and only if $G$ can be constructed from $G_0$ using basic \BG-operations.
\end{theorem}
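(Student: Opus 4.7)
The plan is to prove both implications by induction, mirroring the structure of the proof of Theorem~\ref{barnettetheorem} but making the starting graph a parameter. For the backward (constructive) direction, assume $G$ is obtained from $G_0$ by a sequence of basic \BG-operations. Observation~\ref{observation} immediately produces a subdivision of $G_0$ inside $G$, so it suffices to show, by induction on the length of the sequence, that each basic \BG-operation preserves $3$-connectedness. Operation~\ref{operation1} is handled by the Expansion Lemma, and operations~\ref{operation2} and~\ref{operation3} reduce to adding a new node of degree at least $3$ to a subdivided (and hence still $3$-connected) graph, to which the Expansion Lemma applies again.

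For the forward direction, assume $G$ is $3$-connected and contains a subdivision $S \subseteq G$ of $G_0$. I would induct on $|E(G)| - |E(S)|$. In the base case this quantity is $0$; then $S = G$, because a vertex of $G \setminus S$ would contribute extra edges as $G$ has no isolated vertices. Hence $G$ is itself a subdivision of $G_0$, but the $3$-connectedness of $G$ forces $\delta(G) \geq 3$, so $S$ has no degree-$2$ nodes and $G = G_0$, giving the empty construction sequence. For the inductive step I would invoke the key lemma below to find a \BG-path $P$ for $S$ in $G$, and set $S' := S \cup P$. By the duality between \BG-paths and basic \BG-operations established in Section~\ref{identifying}, $\emph{smooth}(S')$ is obtained from $G_0$ by a single basic \BG-operation, which by the backward direction yields a $3$-connected graph. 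Since $P$ contributes at least one new edge, $|E(G)| - |E(S')| < |E(G)| - |E(S)|$, and the induction hypothesis applied to $\emph{smooth}(S')$, prepended by the one \BG-operation that takes $G_0$ to $\emph{smooth}(S')$, gives the desired sequence.

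The main obstacle is the key lemma: if $S \subsetneq G$, with $G$ being $3$-connected and $S$ a subdivision of a $3$-connected graph, then $G$ contains a \BG-path for $S$. Connectedness of $G$ and $S \subsetneq G$ easily produce at least one $S$-ear, that is, a path $P = x \rightarrow y$ in $G$ whose interior avoids $V(S)$ and whose edges avoid $E(S)$; condition~(\ref{bgpathdefinition1}) of Definition~\ref{bgpathdefinition} is then automatic. The work lies in choosing $P$ to also satisfy~(\ref{bgpathdefinition2}) and~(\ref{bgpathdefinition3}). Suppose every $S$-ear violates~(\ref{bgpathdefinition2}) and its endnodes $x,y$ lie in the same link $L$ with real endnodes $a,b$, where at least one of $x,y$ is interior to $L$. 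Then the interior of $L$, together with all $S$-ears attached to it, would be cut off from the remainder of $S \setminus \{a,b\}$ after deleting $\{a,b\}$ in $G$; the remainder is still connected by the $3$-connectedness of $\emph{smooth}(S) = G_0$, so $\{a,b\}$ would form a separation pair of $G$, contradicting $3$-connectedness. A symmetric argument handles~(\ref{bgpathdefinition3}) using the at most four real endnodes of a pair of parallel links. When the naively chosen $S$-ear fails these conditions, the Fan Lemma applied at an interior node of the ear supplies an alternative ear with better-separated endnodes, after finitely many reroutings producing a true \BG-path.
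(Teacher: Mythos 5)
There is a genuine gap, and it is precisely the one the paper warns about in Section~\ref{prescribing}. Your forward-direction induction starts from an \emph{arbitrary} subdivision $S$ of $G_0$ contained in $G$, finds a \BG-path $P$ for $S$, and then asserts that $\emph{smooth}(S\cup P)$ arises from $\emph{smooth}(S)$ by a single \emph{basic} \BG-operation. That assertion is false: a \BG-path for $S$ corresponds to a \BG-operation, but not necessarily to a basic one, since the new link may be parallel to an existing link of $S$. The paper's Figure~\ref{fig:counterexample} is a direct counterexample to your inductive step: take $G_0=K_4$, $S=K_4\subset G$, and $G$ equal to the $K_4$ plus one extra vertex joined to three of its vertices. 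Here \emph{every} \BG-path for $S$ creates a parallel link, so no basic operation is available from this $S$, and your induction cannot proceed. Your rerouting argument via the Fan Lemma only repairs conditions~\ref{bgpathdefinition}.\ref{bgpathdefinition2} and~\ref{bgpathdefinition}.\ref{bgpathdefinition3} of Definition~\ref{bgpathdefinition}; it does not and cannot guarantee that the resulting operation avoids parallel edges. Once a non-basic operation slips in, the intermediate graph is a multigraph and the sequence you build is not the one Theorem~\ref{kelmans} claims.

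What your argument actually proves is the non-basic variant, i.e.\ essentially Theorem~\ref{multipleconstruction} iterated into Theorem~\ref{constructions}\eqref{theoremitem1}, which is the paper's own (different and algorithmically useful) contribution. Theorem~\ref{kelmans} itself is obtained in \cite{Kelmans1978,Titov1975} by a different mechanism: one does not fix the subdivision of $G_0$ in advance but chooses a subdivision with the \emph{maximum} number of edges in $G$ and then repeatedly adds \emph{longest} \BG-paths; these extremal choices are what rule out parallel links and keep every operation basic. (This is also why Theorem~\ref{kelmans} does not directly yield an efficient algorithm, since longest paths are NP-hard to find.) To salvage your proof you would need to replace ``arbitrary $S$, arbitrary ear, finitely many reroutings'' by such an extremal choice, and maintain the corresponding maximality invariant through the induction. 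A smaller remark: in the backward direction, the Expansion Lemma does not directly apply to operations~\ref{operation2} and~\ref{operation3}, because the graph obtained after subdividing an edge is no longer $3$-connected (the subdivision vertex has degree $2$); preservation of $3$-connectedness under \BG-operations needs a direct Menger-type argument, or simply a citation of Theorem~\ref{barnettetheorem}.
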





\section{Prescribing Subdivisions}\label{prescribing}
Both Theorems~\ref{barnettetheorem} and~\ref{kelmans} choose a very special subdivision of the $K_4$ (resp. $G_0$) on which the construction sequence starts, in fact one having the maximum number of edges in $G$. The construction sequence is then obtained by adding longest \BG-paths. Unfortunately, computing these depends heavily on solving the longest paths problem, which is known to be NP-hard even for $3$-connected graphs~\cite{Garey1976}.

This gives rise to the question whether Theorems~\ref{barnettetheorem} and~\ref{kelmans} can be strengthened to start at a \emph{prescribed} subdivision $H \subseteq G$ of $G_0$ instead of an arbitrary one. Note that this is equivalent to the constraint $S_0 = H$. Such a result would provide an efficient computational approach to construction sequences, since it allows us to search the neighborhood of $H$ for \BG-paths, yielding a new prescribed subdivision of a $3$-connected graph.

\begin{wrapfigure}[10]{R}{5cm}
	\centering
	\includegraphics[scale=0.4]{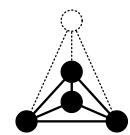}
	\caption{Every possible \BG-operation adds a parallel edge.}
	\label{fig:counterexample}
\end{wrapfigure}

However, when restricted to basic operations it is not possible to prescribe $H$, as the minimal counterexample in Figure~\ref{fig:counterexample} shows: Consider the graph $G$ consisting of a $K_4 = H$ depicted in black with an additional node connected to three nodes of the $K_4$. Then every \BG-path for $H$ will create a parallel link, although $G$ is simple. But what if we drop the condition that construction sequences have to be basic? The following theorem shows that at this expense we can indeed start a construction sequence from any prescribed subdivision.

\begin{theorem}\label{multipleconstruction}
Let $G$ be a $3$-connected graph and $H \subset G$ with $H$ being a subdivision of a $3$-connected graph. Then there is a \BG-path for $H$ in $G$. Moreover, every link of $H$ of length at least $2$ contains an inner node on which a \BG-path for $H$ starts.
\end{theorem}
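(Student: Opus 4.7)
The plan is to establish the ``moreover'' part first, since it implies the first statement whenever $H$ has a link of length $\geq 2$; the remaining case, where every link of $H$ has length $1$, is handled at the end. Fix a link $L$ of length $\geq 2$ with endnodes $a, b$, write $\mathrm{int}(L)$ for its inner vertices, and set $B := V(L) \cup \{\text{inner nodes of links of } H \text{ parallel to } L\}$. The first step is a reformulation: finding an inner node of $L$ on which a \BG-path starts reduces to finding a $G$-path from $\mathrm{int}(L)$ to $V(H)\setminus B$ whose internal vertices avoid $V(H)$. Such a shortest path automatically satisfies all three conditions of Definition~\ref{bgpathdefinition}: the intersection condition by construction; the no-common-link condition because the endpoint $u \in V(H)\setminus B$ does not lie on $L$ and hence shares no link with vertices of $\mathrm{int}(L)$; and the no-parallel-inner condition because $u$ is not an inner node of any link parallel to $L$. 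Minimality forces the internal vertices out of $V(H)$, since an internal vertex either in $\mathrm{int}(L)$ or in $V(H)\setminus B$ would yield a strictly shorter path of the same form.

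It therefore suffices to show that $\mathrm{int}(L)$ and $V(H)\setminus B$ lie in the same connected component of $G \setminus S$, where $S := B \setminus \mathrm{int}(L) = \{a,b\} \cup \{\text{inner nodes of links parallel to } L\}$. Suppose not. Both sets are non-empty ($L$ has length $\geq 2$, and $\mathrm{smooth}(H)$ has at least $4$ real vertices, yielding at least two real nodes outside $\{a,b\}$), so $S$ is a genuine vertex separator of $G$, forcing $|S|\geq 3$ by $3$-connectivity. When $\mathrm{smooth}(H)$ is simple, no link is parallel to $L$ and $S = \{a, b\}$ has size $2$, the desired contradiction.

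For the first statement, if $H$ has a link of length $\geq 2$ the moreover statement already supplies a \BG-path; otherwise $H = \mathrm{smooth}(H)$ is a simple $3$-connected graph, and $H \subsetneq G$ yields either an edge $e = xy \in E(G)\setminus E(H)$ with $x, y \in V(H)$ (a single-edge \BG-path, as both endpoints are real nodes of $H$), or a vertex $u \in V(G)\setminus V(H)$, to which I apply the Fan Lemma~\ref{fanlemma} with $A = V(H)$ to obtain three internally node-disjoint paths from $u$ to distinct vertices of $V(H)$ whose interiors avoid $V(H)$, and concatenate two of them to form a \BG-path through $u$. The main technical obstacle is the separator step when $\mathrm{smooth}(H)$ has parallel edges: there $|S|$ can legitimately be $\geq 3$ and $3$-connectivity alone yields no contradiction, so one must absorb the inner vertices of all links parallel to $L$ (together with the external ``blobs'' attached to them) into the reachable region around $\mathrm{int}(L)$, thereby shrinking $S$ back to $\{a, b\}$, and then verify that the resulting \BG-path can be taken to start at an inner node of $L$ itself.
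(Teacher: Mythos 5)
Your reduction of the ``moreover'' part to finding a path from $\mathrm{int}(L)$ to $V(H)\setminus B$ whose interior avoids $V(H)$ is sound, and your treatment of the case $H=\emph{smooth}(H)$ matches the paper's (edge between real nodes, or Fan Lemma through an outside vertex). But there is a genuine gap exactly where you flag it: when $H$ has links parallel to $L$, your separator $S=\{a,b\}\cup\{\text{inner nodes of links parallel to }L\}$ can have size at least $3$, so $3$-connectivity does not contradict its being a separator, and the announced repair (``absorb the inner vertices of parallel links and their attached blobs into the reachable region, shrinking $S$ back to $\{a,b\}$'') is neither carried out nor obviously correct. Parallel links are not a degenerate side case here --- they are the very reason the theorem drops the basicness requirement (cf.\ Figure~\ref{fig:counterexample}) --- so the argument cannot stand with this case open. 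Moreover, even if the absorption succeeded, the path you would obtain could start at an inner node of a parallel link rather than of $L$ itself, which you acknowledge but do not resolve.

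The paper sidesteps all of this by never requiring the connecting path to avoid the parallel links. It deletes only the two endnodes $a,b$ of the link $T=L$, so $3$-connectivity directly yields a nonempty set $Q$ of paths from an inner node $x$ of $T$ to $V(H)\setminus V(T)$ avoiding $a$ and $b$. It then argues that some $P\in Q$ ends at a node $y$ \emph{not} lying in a link parallel to $T$: otherwise the component of $x$ in $G-\{a,b\}$ would contain no real node of $H$, while $\emph{smooth}(H)$ has at least four vertices and hence at least two real nodes lie outside $\{a,b\}$, making $\{a,b\}$ a separation pair. Finally it truncates $P$ at $x'$, the last vertex of $P$ in $T$ or in a link parallel to $T$, and at $y'$, the first vertex of $V(H)$ after $x'$; the segment $x'\rightarrow y'$ satisfies all three conditions of Definition~\ref{bgpathdefinition}. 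To complete your proof you would need either this ``choose the endpoint outside the parallel links, then truncate'' device or an actual proof of your absorption claim; as written, the central case is missing.
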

\begin{proof}
We distinguish two cases.
\begin{itemize}
	\item $H \neq \emph{smooth}(H).$\\
	Then links of length at least $2$ exist in $H$ and we pick an arbitrary one of them, say $T$. Let $x$ be an inner node of $T$, and let $Q$ be the set of paths in $G$ from $x$ to a node in $V(H) \setminus V(T)$ avoiding the endnodes of $T$ (see Figure~\ref{fig:dfs}). By the $3$-connectedness of $G$, the set $Q$ cannot be empty and every path in $Q$ fulfills Definition~\ref{bgpathdefinition}.\ref{bgpathdefinition2}. There is at least one path $P = x \rightarrow y$ in $Q$ with $y$ being not contained in a parallel link of $T$, because otherwise the endnodes of $T$ would form a separation pair. Let $x'$ be the last node in $P$ that is in $T$ or in a parallel link of $T$ and let $y'$ be the first node after $x'$ that is in $V(H)$. Then $x' \rightarrow y'$ has properties~\ref{bgpathdefinition}.\ref{bgpathdefinition1} and~\ref{bgpathdefinition}.\ref{bgpathdefinition3} and is a \BG-path for $H$.
	\item $H = \emph{smooth}(H).$\\
	Then $H$ consists only of real nodes and since $H \neq G$, there is a node in $V(G) \setminus V(H)$ or an edge in $E(G) \setminus E(H)$. At first, assume that there is a node $x \in V(G) \setminus V(H)$. Then, by the $2$-connectedness of $G$ and Fan Lemma~\ref{fanlemma} we can find a path $P = y_1 \rightarrow x \rightarrow y_2$ with no other nodes in $H$ than $y_1$ and $y_2$. For $P$ the properties~\ref{bgpathdefinition}.\ref{bgpathdefinition1}-\ref{bgpathdefinition}.\ref{bgpathdefinition3} hold, because no link in $H$ can have inner nodes. Let now $V(G) = V(H)$ and $e$ an edge in $E(G) \setminus E(H)$. Then $e$ must be a \BG-path for $H$, since both endnodes are real.
\end{itemize}
\end{proof}

In Theorem~\ref{multipleconstruction}, non-basic operations can only occur in the case $H = \emph{smooth}(H)$ when a path through a node of $V(G) \setminus V(H)$ is chosen. Although we cannot avoid that, it is possible to obtain a basic construction by augmenting the \BG-operations with a fourth operation~\ref{operation4}.

\begin{enumerate}[label=(\alph*), start=4]
	\item connect a new node to three distinct nodes
	\label{operation4}
\end{enumerate}

Operation~\ref{operation4} preserves $3$-connectedness with Lemma~\ref{expansionlemma} and is basic, because each new edge ends on the new node. Whenever we encounter a node in $V(G) \setminus V(H)$ in Theorem~\ref{multipleconstruction}, we know by the Fan Lemma~\ref{fanlemma} and the $3$-connectedness of $G$ that there are three internally node-disjoint paths to real nodes in $H$ with all inner nodes being in $V(G) \setminus V(H)$. Adding these paths to $H$ is called an \emph{expand} operation and corresponds to operation~\ref{operation4} in the smoothed graph. This gives the following result.

\begin{theorem}\label{constructions}
Let $G$ be a simple graph and let $H$ be a subdivision of a $3$-connected graph. Then
\begin{align}
	&\ G \text{ is $3$-connected and $H \subseteq G$}\notag\\
	\Leftrightarrow &\ \delta(G) \geq 3 \text{ and } \exists \text{ construction sequence from $H$ to $G$ using \BG-paths}\label{theoremitem1}\\
	\Leftrightarrow &\ \delta(G) \geq 3 \text{ and } \exists \text{ basic construction sequence from $H$ to $G$ using \BG-paths}\notag\\
	&\ \text{and the expand operation}\label{theoremitem2}
\end{align}
\end{theorem}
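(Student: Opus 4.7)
The plan is to prove the two forward implications by iterating Theorem~\ref{multipleconstruction} and the two backward implications by a short induction on the length of the sequence.

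For both backward directions, I would induct on the number of operations. The initial graph $\emph{smooth}(H)$ is $3$-connected since $H$ is a subdivision of a $3$-connected graph. As noted right after Definition~\ref{bgpathdefinition}, every BG-path for $S_i$ corresponds to a BG-operation on $G_i = \emph{smooth}(S_i)$, and BG-operations preserve $3$-connectedness by Theorem~\ref{barnettetheorem}; the expand operation corresponds to operation~\ref{operation4}, which preserves $3$-connectedness by the Expansion Lemma~\ref{expansionlemma}. Hence $\emph{smooth}(G)$ is $3$-connected, and together with $\delta(G)\ge3$ this forces $\emph{smooth}(G)=G$, so $G$ itself is $3$-connected. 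The inclusion $H\subseteq G$ is immediate since every step only adds vertices and edges.

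For the forward implication to~\eqref{theoremitem1}, set $S_0:=H$ and, while $S_i\neq G$, use Theorem~\ref{multipleconstruction} to pick a BG-path $P_i$ for $S_i$ in $G$ and put $S_{i+1}:=S_i\cup P_i$. By the BG-path/BG-operation correspondence each $\emph{smooth}(S_{i+1})$ is again $3$-connected, so Theorem~\ref{multipleconstruction} keeps applying. Since $|E(S_i)|$ strictly increases and is bounded by $|E(G)|$, the process terminates; at termination we must have $S_z=G$, for otherwise Theorem~\ref{multipleconstruction} would still furnish a BG-path. Finally, $\delta(G)\ge3$ follows from $G$'s $3$-connectedness. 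For the forward implication to~\eqref{theoremitem2}, I would refine this iteration so that every step is basic. When $S_i\neq\emph{smooth}(S_i)$, the proof of Theorem~\ref{multipleconstruction} constructs a BG-path whose endnode $x'$ is an inner node of some link (the endnodes of the chosen link~$T$ are excluded) and therefore corresponds to operation~\ref{operation2} or~\ref{operation3}; both subdivide a fresh link and create no parallel edges. When $S_i=\emph{smooth}(S_i)$ but $V(S_i)\neq V(G)$, pick any $x\in V(G)\setminus V(S_i)$ and apply the Fan Lemma~\ref{fanlemma} with $k=3$ and $A:=V(S_i)$; since all nodes of $S_i$ are real here, the three internally node-disjoint paths from $x$ to three real nodes of $S_i$ have inner nodes in $V(G)\setminus V(S_i)$, so adding them is a basic expand operation corresponding to operation~\ref{operation4} on $G_i$. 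Otherwise $V(S_i)=V(G)$, and any edge in $E(G)\setminus E(S_i)$ is a BG-path that cannot be parallel to an existing link because $G$ is simple. Each step strictly increases $|V(S_i)|+|E(S_i)|$, so the process terminates with $S_z=G$.

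The main obstacle I anticipate is verifying that the basic refinement never gets stuck producing parallel edges. This amounts to tracking in each of the three sub-cases where a parallel edge could possibly arise and ruling it out: in the first by the structural fact that the endpoint $x'$ of the BG-path from Theorem~\ref{multipleconstruction} is always an inner link node and subdivision introduces a fresh real node; in the second by the simplicity of $G$; and in the third by using $S_i=\emph{smooth}(S_i)$ to ensure the Fan Lemma paths have no inner node in $V(S_i)$. Once these details are checked, the remainder is routine bookkeeping using Theorem~\ref{multipleconstruction}, Theorem~\ref{barnettetheorem}, and the Expansion Lemma.
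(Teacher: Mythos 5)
Your proposal is correct and follows essentially the same route as the paper: the forward implications are obtained by iterating Theorem~\ref{multipleconstruction} (with the expand operation substituted for paths through nodes of $V(G)\setminus V(H)$ to keep the sequence basic), and the backward implications follow because adding \BG-paths keeps each $S_{i+1}$ a subdivision of a $3$-connected graph via Theorem~\ref{barnettetheorem} and the Expansion Lemma, with $\delta(G)\geq 3$ forcing the final subdivision to equal its smoothing. The paper's own proof is just a terser version of this; your extra bookkeeping on why operations~\ref{operation2} and~\ref{operation3} are automatically basic and why simplicity of $G$ handles the $V(S_i)=V(G)$ case is sound but not a new idea.
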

\begin{proof}
Let $G$ be $3$-connected and $H \subseteq G$. Then $\delta(G) \geq 3$ holds and if $H = G$, the desired construction sequences are empty and exist. If $H \subset G$, we can apply Theorem~\ref{multipleconstruction} iteratively with or without the additional expand operation and the construction sequences exist as well. For the sufficiency part, both construction sequences imply $H \subseteq G$, since only paths are added to construct $G$. Additionally, $G$ must be $3$-connected, as adding \BG-paths to each $S_i$ preserves $S_{i+1}$ to be a subdivision of a $3$-connected graph with Theorem~\ref{barnettetheorem}, and $\delta(G) \geq 3$ ensures that the last subdivision $G$ of a $3$-connected graph is $3$-connected itself.
\end{proof}

\section{Representations}\label{representations}
\begin{figure}
	\centering
	\includegraphics[scale=0.4]{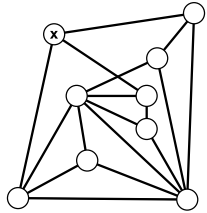}
	\caption{A $3$-connected graph having a node $x$ of degree $3$ with no incident edge being removable.}
	\label{fig:Degree3NodeWithoutRemovableEdge}
\end{figure}

A straight-forward algorithm to compute Barnette and Gr\"unbaum's construction sequence of a $3$-connected graph is to search iteratively for removable edges. But in contrast to the algorithm in Section~\ref{tuttecharacterization} that computes contractible edges, this approach only leads to an $O(n^3)$ algorithm. The reason for the additional factor of $n$ is that not all nodes with degree $3$ must have an incident removable edge (see Figure~\ref{fig:Degree3NodeWithoutRemovableEdge} for a counterexample on $9$ nodes) and we have to try every edge in the worst case. Computing \BG-paths instead of \BG-operations allows us to obtain better running times, but first we need to know how exactly construction sequences can be represented.

An obvious representation of a construction sequence $Q$ would be to store the graph $G_0 = \emph{smooth}(H)$ and in addition every \BG-operation, which gives the sequence $G_0,\ldots,G_z = G$. Unfortunately, the graphs $G_i$ are not necessarily subgraphs of $G_{i+1}$, so we have to take care of relabeled edges when specifying each operation.

Whenever an edge $e$ is subdivided as part of an operation~\ref{operation2} or~\ref{operation3}, we specify it by its index in $G_i$ followed by assigning new indices for the new degree-two node and one of the two new separated edge parts in $G_{i+1}$. The other edge part keeps the index of $e$.

Similarly, on operations~\ref{operation1} and~\ref{operation2}, real endnodes of the added edge are specified by their indices in $G_i$. We assign a new index for the added edge in $G_{i+1}$, too. Finally, we have to impose the constraint that $G_z$ is not just isomorphic but identical to $G$, meaning that nodes and edges of $G_z$ and $G$ are labeled by exactly the same indices, since otherwise we would have to solve the graph isomorphism problem to check that $Q$ really constructs~$G$.

\begin{figure}
	\centering
	\includegraphics[scale=0.7]{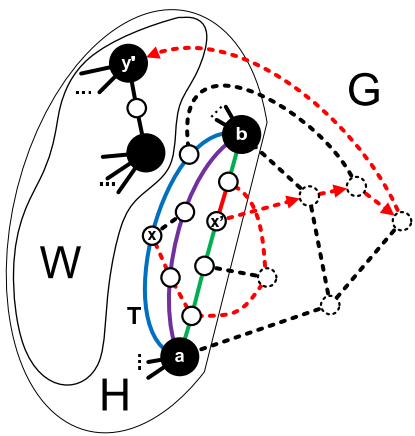}
	\caption{The case $H \neq \emph{smooth}(H)$. Dashed edges are in $E(G) \setminus E(H)$, arrows depict the \BG-path $x' \rightarrow y'$.}
	\label{fig:dfs}
\end{figure}

On the other hand, the identification of $G_i$ with a subgraph in $G$ allows us to represent $Q$ without indexing issues: We just store $S_0 \subset G$ and the \BG-paths $C_0,\ldots,C_{z-1}$. Hence, we can represent each construction sequence $Q$ of $G$ in the following two ways.

\begin{itemize}
	\item \emph{Edge representation}: Represent $Q$ by $G_0$ and a sequence of \BG-operations, along with specifying new and old indices for each operation, such that $G_z$ and $G$ are labeled the same.
	\item \emph{Path representation}: Represent $Q$ by $S_0$ and \BG-paths $C_0,\ldots,C_{z-1}$.
\end{itemize}

Both representations refer to the same sequence of graphs $G_0,\ldots,G_z$ and are of size $\theta(m)$, assuming the uniform cost model. The next lemma states that it does not matter which of the two representations we compute.

\begin{lemma}\label{representation}
The edge and path representations of a construction sequence $Q$ can be transformed into each other in $O(m)$ time. Moreover, the representation computed is a unique representation of $Q$.
\end{lemma}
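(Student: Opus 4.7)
The plan is to treat both representations as different encodings of the same sequence $G_0,G_1,\ldots,G_z=G$ of intermediate graphs, and to transform between them in a single pass while maintaining a dynamic correspondence between the edges of the current $G_i$ and the links of $S_i$ in $G$. Concretely, I would store each link as a doubly-linked list of the edges of $G$ composing it, equip every edge of $G$ with a pointer to its containing link, and remember each link's pair of real endnodes. Since the links partition $E(S_i)$ and the BG-paths $C_i$ are internally disjoint from $S_i$, the whole data structure has total size $O(m)$, and link lookups, splits, and concatenations each cost $O(1)$.

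For the direction path representation $\to$ edge representation, I process the BG-paths $C_0,\ldots,C_{z-1}$ in order. Initial indices for $G_0 = \emph{smooth}(S_0)$ are derived from $S_0$ by selecting, for every link $L$, a canonical $G$-edge inside $L$ (say, the one of smallest index) as $L$'s representative, and using that as $L$'s identifier in $G_0$. For each $C_i = x \to y$, I look up the links containing the endnodes in $O(1)$, read off from those whether $x$ (resp.\ $y$) is a real node or an inner node, and thereby classify $C_i$ as operation~\ref{operation1}, \ref{operation2}, or~\ref{operation3}. Splitting the involved link at $x$ into two halves and letting the half containing the representative keep the old index (the other half receiving a fresh index named after its own representative) realises the required bookkeeping; adding the new edge for $xy$ and updating the pointers yields $G_{i+1}$ together with the recorded operation. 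Each $C_i$ is scanned once, and each edge of $G$ lies in at most one $C_i$ or in $S_0$, so the total running time is $O(m)$.

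For the opposite direction, I process the operations in reverse from $G_z = G$, initialising $S_z := G$ with every edge of $G$ as a singleton link bearing its own $G$-index. At reverse step $i{+}1 \to i$, the link currently associated to the added edge $xy$ of the operation is exactly the BG-path $C_i$, which I output and then delete from the structure. For operations~\ref{operation2} or~\ref{operation3} I then concatenate the two link-lists meeting at the real endnode $x$ (and, for~\ref{operation3}, similarly at $y$) to reconstitute the subdivided edge of $G_i$. All updates are $O(1)$, the remaining links describe $S_0$, and the total time is $O(m)$. Uniqueness in both directions is automatic: every operational decision---type classification, index inheritance, identification of $C_i$---is forced once the convention for representative edges is fixed, and this convention is itself dictated by the requirement $G_z=G$ on matching labels. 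The main subtlety will be arranging the linked-list primitives so that splits and concatenations truly run in $O(1)$ time while keeping the edge-to-link pointers consistent; once this bookkeeping is settled, the remainder is a direct translation between the two canonical encodings of the same sequence $G_0,\ldots,G_z$.
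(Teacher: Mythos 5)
The paper gives no proof of this lemma (it is ``Omitted''), so there is nothing to compare against directly; your proposal must stand on its own. Its architecture --- maintain a dynamic correspondence between the edges of $G_i$ and the links of $S_i$, and sweep through the operations once --- is the natural one, and your edge-to-path direction is sound: since you only ever locate a link via the recorded index of the added or subdivided edge, an index-to-link map touched at $O(1)$ entries per operation suffices, list concatenation needs no per-edge pointer updates, and the total cost is $O(m)$.

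The path-to-edge direction, however, does not yet deliver the claimed $O(m)$ bound, and the ``subtlety'' you defer is exactly where it fails. Two concrete problems. First, your indexing convention (a link is named after its smallest-index $G$-edge) forces you, at every split of a link $L$ at a node $x$, to decide which half contains $L$'s representative; with plain doubly-linked lists this is not an $O(1)$ test, and recomputing representatives by traversal can cost $\Theta(m^2)$ over the whole sequence. This is fixable by a different canonical convention, e.g.\ index each link by the $G$-edge incident to a designated endpoint; then the half containing that endpoint keeps the old index, the other half is named after its $G$-edge incident to $x$, both available in $O(1)$, and every singleton link still ends up carrying its own $G$-index, so $G_z$ is labeled as $G$. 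Second, and independently, you need to answer ``which link of $S_i$ contains the endnode $x$ of $C_i$ as an inner node.'' Storing a containing-link pointer on every inner node and relabeling one half at each split is again not $O(m)$ in total (relabeling the half without the representative can cost $\Theta(m^2)$; always relabeling the smaller half gives only $O(m\log m)$). You need either an amortized scheme exploiting that this nontrivial query is asked at most once per node (after which the node is real), e.g.\ a union-find over the split tree of each original link, or an offline preprocessing of all split positions per link. Until one of these is pinned down, the $O(m)$ claim for this direction is not established; the uniqueness claim, by contrast, is adequately handled by fixing the convention.
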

\begin{proof}
Omitted.
\end{proof}

\section{Certifying and Testing $3$-Connectedness in $O(n^2)$}\label{certifying}
We use construction sequences in the path representation as a certificate for the $3$-connectedness of graphs. This leads to a new, certifying method for testing graphs on being $3$-connected. The total running time of this method is $O(n^2)$, however this is dominated by the time needed for finding the construction sequence and every improvement made there will automatically result in a faster $3$-connectedness test. The input graph is a multigraph and does not have to be biconnected nor connected. We follow the steps:

\begin{itemize}
	\item Apply preprocessing of Nagamochi and Ibaraki to the graph and get $G$ in $O(n+m)$\\(This improves the total running time by decreasing the number of edges to $O(n)$.)
	\item Try to compute a $K_4$-subdivision $S_0$ in $G$ and prescribe it in $O(n)$
	\begin{itemize}
		\item Failure: Return a separation pair
	\end{itemize}
	\item Try to compute a construction sequence from $S_0$ to $G$ in $O(n^2)$
	\begin{itemize}
		\item Success: Return the construction sequence
		\item Failure: Return a separation pair
	\end{itemize}
\end{itemize}

\begin{wrapfigure}[17]{R}{5cm}
	\centering
	\includegraphics[scale=0.6]{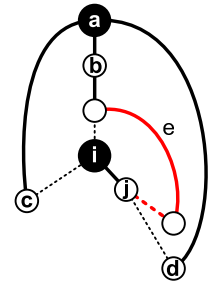}
	\caption{Finding a $K_4$-subdivision. Dashed edges can be (empty) paths, arcs depict backedges.}
	\label{fig:findk4}
\end{wrapfigure}

The preprocessing step preserves the graph to be $3$-connected or to be not $3$-connected. We first describe how to find a $K_4$-subdivision by one Depth First Search (DFS), which as a byproduct eliminates self-loops and parallel edges and sorts out graphs that are not connected or have nodes with degree at most $2$. Let $a$ (resp. $b$) be the node in the DFS-tree $T$ that is visited first (resp. second). If $G$ is $3$-connected, then $a$ and $b$ have exactly one child, otherwise they form a separation pair. We choose two arbitrary neighbors $c$ and $d$ of $a$ that are different from $b$ (see Figure~\ref{fig:findk4}). W.l.o.g., let $d$ be visited later by the DFS than $c$. Let $i \neq b$ the least common ancestor of $c$ and $d$ in $T$. As $d \neq i$ must hold, let $j$ be the child of $i$ that is contained in the path $i \rightarrow d$ in $T$.

If $G$ is $3$-connected, we can find a backedge $e$ that starts on a node $z$ in the subtree rooted at $j$ and ends on an inner node $z'$ of $a \rightarrow i$ in time $O(n)$. If $e$ does not exist, $a$ and $i$ form a separation pair, otherwise we have found a $K_4$-subdivision with real nodes $a$, $i$, $z$ and $z'$. The paths connecting this real nodes in $T$ together with the three visited backedges constitute the $6$ paths of the $K_4$-subdivision.

Once the $K_4$-subdivision $S_0$ is found, we follow the lines of Theorem~\ref{multipleconstruction} and try to construct the path representation $C_0,\ldots,C_{z-1}$. If favored, this can be transformed to an edge representation in $O(m)$ later. We assign an index for every link and store it on each of the inner nodes of that link. Moreover, we maintain pointers for each link to its endnodes.

In case $H \neq \emph{smooth}(H)$ of Theorem~\ref{multipleconstruction} we pick an arbitrary node $x$ of degree two. Let $T = a \rightarrow b$ be the link that contains $x$ and let $W$ be the set of nodes $V(H) \setminus V(T)$ minus all nodes in parallel links of $T$ (see Figure~\ref{fig:dfs}). We compute the path $P = x \rightarrow y'$ by temporarily deleting $a$ and $b$ and performing a DFS on $x$ that stops on the first node $y' \in W$. We can check whether a node lies in a parallel link of $T$ in constant time by comparing the endnodes of its containing link with $a$ and $b$. Thus, the subpath $x' \rightarrow y'$ with $x'$ being the last node contained in $T$ or in a parallel link of $T$ is a \BG-path and can be found efficiently. The links and their indices can be updated in $O(n)$.

Similarly, in case $H = \emph{smooth}(H)$ we delete temporarily all edges in $E(H)$ and start a DFS on a node $x \in V(H)$ that has an incident edge in the remaining graph. The traversal is stopped on the first node $y \in V(H) \setminus \{x\}$. The path $x \rightarrow y$ is then the desired \BG-path and we conclude that for $3$-connected graphs the construction sequence can be found in time $O(n^2)$.

Otherwise, $G$ is not $3$-connected and no construction sequence can exist with Theorem~\ref{constructions}. In that case a DFS starting at node $x$ fails to find a new \BG-path for some subdivision $H \subset G$. If $H \neq \emph{smooth}(H)$, the endnodes of the link that contains $x$ must form a separation pair. Otherwise, $H = \emph{smooth}(H)$ and $x$ must be a cut vertex. Thus, if $G$ is not $3$-connected, the algorithm returns always a separation pair or cut vertex.

If $G$ is simple, the construction sequence can be transformed to the basic construction sequence~\eqref{theoremitem2} with the following Lemma.

\begin{lemma}\label{transform}
For simple graphs $G$, the construction sequences~\eqref{theoremitem1} and~\eqref{theoremitem2} can be transformed into each other in $O(m)$.
\end{lemma}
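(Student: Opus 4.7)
Plan. The transformation is done by a single forward pass over the given sequence, substituting each operation that does not fit the target form by a short sequence of operations of the target form; a careful implementation keeps the total cost $O(m)$.

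For the direction $\eqref{theoremitem2}\Rightarrow\eqref{theoremitem1}$ I would keep each \BG-path unchanged and split every expand into two consecutive \BG-paths. An expand that installs a new vertex $z$ together with three internally node-disjoint paths $P_1,P_2,P_3$ from $z$ to distinct real nodes $y_1,y_2,y_3$ is replaced by (i) the concatenation of the reverse of $P_1$ with $P_2$, viewed as a \BG-path from $y_1$ to $y_2$ through the interior vertex $z$, and (ii) $P_3$, viewed as a \BG-path from the inner vertex $z$ of the link just created to the real node $y_3$. Both meet the three conditions of Definition~\ref{bgpathdefinition}, using that the $y_i$ are distinct real nodes and the $P_i$ are internally disjoint. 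The cost per expand is linear in the total length of its paths, so the overall cost is $O(m)$.

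For the converse $\eqref{theoremitem1}\Rightarrow\eqref{theoremitem2}$ I would start from the observation, already used in the proof of Theorem~\ref{removing}, that properties~\ref{bgpathdefinition2} and~\ref{bgpathdefinition3} force every \BG-path of type~\ref{operation2} or~\ref{operation3} to be basic. Hence only \BG-paths of type~\ref{operation1} can be non-basic: a \BG-path $P$ from a real node $x$ to a real node $y$ whose current subdivision $S_i$ already contains a link $L$ between $x$ and $y$. If $P$ has an inner node $v\in V(G)\setminus V(S_i)$, then the simplicity of $G$ forces $v$ to be promoted to a real node by some later \BG-path $P'=v\to w$, as otherwise the smoothed final graph would retain a parallel edge $xy$; I would pull $P'$ forward so it sits immediately after $P$ and fuse the pair $(P,P')$ into one expand at $v$ with the three paths $v\to x$, $v\to y$, $v\to w$. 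If instead $P$ is the single edge $xy$, then $L$ has length at least two and, by the same simplicity argument, at least one inner node $u$ of $L$ must be promoted later by a basic \BG-path $P''$; pulling $P''$ forward to just before $P$ splits $L$ at $u$ and turns $P$ into a basic operation~\ref{operation1}.

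The main obstacle is verifying correctness of these local rearrangements in the second direction: one has to check that the pulled \BG-paths $P'$ and $P''$ stay valid \BG-paths at their new positions and that the remaining tail of the sequence still constructs $G$ legally. I expect the first check to hold because skipping intermediate \BG-paths only removes inner nodes of later links, never endnodes of $P'$ or $P''$, which preserves all three conditions of Definition~\ref{bgpathdefinition}; if an endnode of the pulled path is not yet in $S_i$, the rearrangement is iterated on the operations that introduce it. The second check follows from Theorem~\ref{constructions}, as every intermediate graph after the rearrangement remains a subdivision of a $3$-connected graph. An amortization argument shows every \BG-path is touched only a constant number of times, which gives the claimed $O(m)$ running time.
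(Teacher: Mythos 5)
The paper omits its own proof of this lemma, so a direct comparison is impossible and I can only assess your argument on its merits. Your direction $\eqref{theoremitem2} \Rightarrow \eqref{theoremitem1}$ is correct: splitting an expand at $z$ into the \BG-path $y_1 \rightarrow z \rightarrow y_2$ followed by the \BG-path $z \rightarrow y_3$ satisfies all three conditions of Definition~\ref{bgpathdefinition} (after the first path, $z$ is an inner node of the newly created link and $y_3$ is real, hence in particular not an inner node of a parallel link), and the cost is clearly linear. Your preliminary observation for the converse is also right: only \BG-paths of type~\ref{operation1} (both endnodes real and already joined by a link) can be non-basic, and simplicity of $G$ forces either the new path or the pre-existing parallel link to have an inner node.

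The gap is in the fusing step of $\eqref{theoremitem1} \Rightarrow \eqref{theoremitem2}$. You fuse $P$ with the later path $P' = v \rightarrow w$ into an expand at $v$ with legs $v \rightarrow x$, $v \rightarrow y$ and $v \rightarrow w$, but an expand must attach its new node to three \emph{distinct} real nodes, and nothing guarantees this: $w$ may coincide with $x$ or $y$ (for instance when $P$ has several inner nodes and the first path promoting $v$ leads back to an endnode of $P$), $w$ may itself be a non-real inner node of a link of the current subdivision, or $w$ may not yet exist in $S_{i+1}$ because it is only created by an operation lying between the old positions of $P$ and $P'$. Your escape clause, ``iterate the rearrangement on the operations that introduce it,'' is exactly where the difficulty lives: the iteration can cascade, and neither its correctness nor the claimed constant-touch amortization is established. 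The paper itself flags this failure mode: Figure~\ref{fig:notexpand} is captioned ``No expand operation can be formed,'' i.e.\ there are configurations in which the naive replacement of a non-basic \BG-path by an expand at one of its inner nodes is impossible and a different local rewriting is required. Until that case is handled, the hard direction of the lemma is not proved; the analogous issue also affects your single-edge case, where several parallel links between $x$ and $y$ may each need an inner node promoted before adding the edge $xy$ becomes basic.
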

\begin{proof}
Omitted.
\end{proof}

\begin{wrapfigure}[10]{R}{5cm}
	\centering
	\includegraphics[scale=0.6]{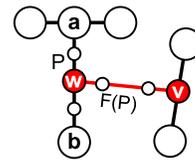}
	\caption{No expand operation can be formed.}
	\label{fig:notexpand}
\end{wrapfigure}

\begin{theorem}\label{algorithm}
The construction sequences~\eqref{theoremitem1} and~\eqref{theoremitem2} can be computed in $O(n^2)$ and establish a certifying $3$-connectedness test with the same running time.
\end{theorem}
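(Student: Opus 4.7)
The plan is to analyze the algorithm already sketched in Section~\ref{certifying} and verify both the runtime and the certifying property. First apply the Nagamochi--Ibaraki sparsification in $O(n+m)$, which preserves $3$-connectedness or non--$3$-connectedness while reducing the edge count to $m = O(n)$. Then invoke the single-DFS procedure to produce a $K_4$-subdivision $S_0$ in $O(n)$, or return a separation pair certifying non--$3$-connectedness; the correctness of this step is argued in the paragraph around Figure~\ref{fig:findk4}. Starting from $S_0$ iterate the two-case procedure of Theorem~\ref{multipleconstruction}: if $H \neq \emph{smooth}(H)$ pick a degree-$2$ node $x$, temporarily delete the endnodes of its link $T$, and DFS from $x$ until a node outside $V(T)$ and outside parallel links of $T$ is reached; otherwise delete the edges of $H$ and DFS from some real node with remaining incident edges. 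Each DFS yields either a new \BG-path or a certificate (cut vertex / separation pair) of non--$3$-connectedness.

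For the time bound, each DFS runs in $O(n+m) = O(n)$ since preprocessing kept $m = O(n)$, so the central claim to justify is that the number $z$ of iterations is $O(n)$. This follows because every \BG-path $C_i$ consumes at least one previously unused edge of $G$, and $|E(G)| = O(n)$; hence $z = O(n)$ and the total work is $O(n^2)$. The maintained data structure, which stores a link index on every interior node of a link and a pointer from each link to its two endnodes, supports the queries needed in case $H \neq \emph{smooth}(H)$ (\textquotedblleft is $y$ contained in a parallel link of $T$?\textquotedblright) in constant time, and can be updated after inserting a new \BG-path in $O(n)$ per iteration. To return construction sequence~\eqref{theoremitem2} instead of~\eqref{theoremitem1}, apply Lemma~\ref{transform} as a final $O(m)$ postprocessing step.

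For the certifying aspect, Theorem~\ref{constructions} provides soundness: if the algorithm outputs $(S_0, C_0, \ldots, C_{z-1})$ together with a check that $\delta(G) \geq 3$, then $G$ is $3$-connected. Verification of the certificate in $O(m)$ proceeds by checking that $S_0$ is indeed a subdivision of $K_4$, and then processing the \BG-paths in order, maintaining the same link-index data structure and testing the three properties of Definition~\ref{bgpathdefinition} for each $C_i$ as well as that the final subdivision equals $G$. Using the path representation (Lemma~\ref{representation}) avoids any issue of relabelling. On failure, the returned separation pair or cut vertex is verified in $O(m)$ by a single connectivity check in $G$ minus those nodes.

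The main obstacle is the bookkeeping needed to make every iteration honestly cost $O(n)$: one must show that after adding a \BG-path $C_i$, the link structure of $S_{i+1}$ (including the potential creation of new real endpoints at $x'$ and $y'$, which splits their hosting links) can be updated and the parallel-link information refreshed in $O(n)$. The second, subtler point is extracting the non--$3$-connectedness certificate directly from a failed DFS: in the case $H \neq \emph{smooth}(H)$ the two temporarily deleted endnodes of $T$ must form a separation pair, while in the case $H = \emph{smooth}(H)$ the starting node $x$ must be a cut vertex, and both claims need a short argument from Theorem~\ref{multipleconstruction} contrapositively. Once these points are handled, combining the $O(n)$ per-step analysis with $z = O(n)$ iterations yields the claimed $O(n^2)$ algorithm that both constructs and certifies.
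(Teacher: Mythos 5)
Your proposal follows essentially the same route as the paper: Nagamochi--Ibaraki preprocessing, a single DFS to find and prescribe a $K_4$-subdivision, iterated application of the two cases of Theorem~\ref{multipleconstruction} with the link-index data structure, extraction of a separation pair or cut vertex on failure, and Lemma~\ref{transform} for the basic variant. The additional points you flag (the $z = O(n)$ bound via each \BG-path consuming a fresh edge, and the contrapositive argument for the returned separator) are correct and only make explicit what the paper leaves implicit.
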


\subsection{Verifying the Construction Sequence}
It is essential for a certificate that it can be easily validated. We could do this by transforming the path representation to the edge representation using Lemma~\ref{representation} and checking the validity of the \BG-operations by comparing indices, but there is a more direct way. First, it can be checked in linear time that all \BG-paths $C_0,\ldots,C_{z-1}$ are paths in $G$ and that these paths partition $E(G) \setminus E(S_0)$. We try to remove the \BG-paths $C_{z-1},\ldots,C_0$ from $G$ in that order (i.\,e., we delete the paths followed by smoothing its endnodes). If the certificate is valid, this is well defined as all removed \BG-paths are then edges. On the other hand we can detect longer \BG-paths $|C_i| \geq 2$ before their removal, in which case the certificate is not valid, since then the inner nodes of $C_i$ are not attached to \BG-paths $C_j$, $j > i$.

\begin{figure}
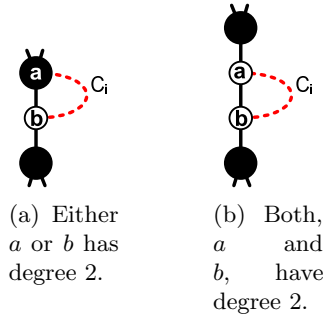

	\centering
	\subfigure[Either $a$ or $b$ has degree $2$.]{
		\includegraphics[scale=0.6]{./Pictures/Verifying1}
		\label{fig:check1}
	}
	\hspace{1cm}
		\subfigure[Both, $a$ and $b$, have degree $2$.]{
		\includegraphics[scale=0.6]{./Pictures/Verifying2}
		\label{fig:check2}
	}
	\caption{Cases where~\ref{bgpathdefinition}.\ref{bgpathdefinition2} fails when $a \in N(b)$.}
	\label{fig:check}
\end{figure}

We verify that every removed $C_i = ab$ corresponds to a \BG-operation by using Definition~\ref{bgpathdefinition} of \BG-paths, and start with checking that $a$ and $b$ lie in our current subgraph for condition~\ref{bgpathdefinition}.\ref{bgpathdefinition1}.

Conditions~\ref{bgpathdefinition}.\ref{bgpathdefinition2} and~\ref{bgpathdefinition}.\ref{bgpathdefinition3} can now be checked in constant time: Consider the situation immediately after the deletion of $ab$, but before smoothing $a$ and $b$. Then all links in our subgraph are single edges, except possibly the ones containing $a$ and $b$ as inner nodes.

Therefore,~\ref{bgpathdefinition}.\ref{bgpathdefinition2} is not met for $C_i$ if $a$ is a neighbor of $b$ and at least one of the nodes $a$ and $b$ has degree two (see Figures~\ref{fig:check} for possible configurations). Condition~\ref{bgpathdefinition}.\ref{bgpathdefinition3} is not met if $N(a) = N(b)$ and both $a$ and $b$ have degree two. Both conditions can be easily checked in constant time. Note that encountering proper \BG-paths $C_{z-1},\ldots,C_i$ does not necessarily imply that the current subgraph is $3$-connected, since false \BG-paths $C_j$, $j < i$, can exist.

It remains to validate that the graph after removing all \BG-paths is the $K_4$. This can done in constant time by checking it on being simple and having exactly $4$ nodes of degree three.

\begin{theorem}\label{verifying}
The construction sequences~\eqref{removals}-\eqref{BGBasic} and \eqref{theoremitem1}-\eqref{theoremitem2} can be checked on validity in time linearly dependent on their length.
\end{theorem}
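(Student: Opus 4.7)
The plan is to justify the verification procedure described immediately before the theorem by (i) arguing its correctness and (ii) accounting the work done per \BG-path as constant, plus an initial linear preprocessing step. First, I would read the certificate and in $O(m)$ time check that each $C_i$ is indeed a walk in $G$, that the $C_i$ partition $E(G)\setminus E(S_0)$, and that $S_0$ is a subdivision of $K_4$. These are purely syntactic checks on the input and clearly linear.

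Next, the plan is to process the removals $C_{z-1},\ldots,C_0$ in reverse order on the current graph $H$, which starts as $G$ and ends (if the certificate is valid) as $S_0$. To get the constant-time bound per step, I maintain, for every node, (a) its current degree, (b) for inner nodes of links, a pointer to the two real endnodes of the link containing it, and (c) adjacency information sufficient to test neighborhood in $O(1)$ (e.g., keeping neighbor lists with cross-pointers so that both deletion of an edge and smoothing of a degree-two vertex take constant time). A key structural observation that I would state and use is the invariant guaranteed by processing in reverse: at the moment we are about to remove $C_i = ab$, all links of the subdivision $S_{i+1}$ other than possibly the ones having $a$ or $b$ as inner nodes are single edges, because $C_{z-1},\ldots,C_{i+1}$ have already been peeled off. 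This is precisely what makes the tests for Definitions~\ref{bgpathdefinition}.\ref{bgpathdefinition2} and~\ref{bgpathdefinition}.\ref{bgpathdefinition3} local.

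With this invariant in hand, checking Definition~\ref{bgpathdefinition}.\ref{bgpathdefinition1} reduces to verifying that $a,b$ are present in $H$ and that $C_i$ has length one at the time we try to remove it (a longer $C_i$ would contain an inner vertex unattached to any later path, which contradicts the partition property); this is $O(1)$ per step. For Definition~\ref{bgpathdefinition}.\ref{bgpathdefinition2}, by the invariant the only way $a$ and $b$ can lie on a common link is the configurations depicted in Figure~\ref{fig:check}, which are detected by testing whether $ab$ is an edge and whether one of $a,b$ has degree two. For Definition~\ref{bgpathdefinition}.\ref{bgpathdefinition3}, again by the invariant, $a$ and $b$ can share two parallel links only if both have degree two and $N(a)=N(b)$, and this is checked in $O(1)$. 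After these tests, I delete $ab$ and smooth whichever of $a,b$ now has degree two, updating the degree counters, adjacency pointers, and link labels in constant time. Finally, when all $C_i$ have been removed, I check in $O(1)$ that the remaining graph has exactly four vertices, each of degree three, and is simple, which certifies $H = K_4$.

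The main obstacle will be formalising the link-tracking data structure so that the claimed constant-time smoothing and neighborhood queries are actually achievable while correctly maintaining the ``current link'' of each inner vertex; once this is in place, correctness follows directly from Definition~\ref{bgpathdefinition} and Theorem~\ref{constructions}, and the total running time is $O(m)$ for preprocessing plus $O(1)$ per path, which is linear in the length of the certificate.
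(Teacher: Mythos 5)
Your proposal matches the paper's verification procedure essentially step for step: the linear-time check that the $C_i$ partition $E(G)\setminus E(S_0)$, the reverse-order peeling with the invariant that all links other than those containing $a$ or $b$ as inner nodes are already single edges, the same constant-time tests for conditions~\ref{bgpathdefinition}.\ref{bgpathdefinition2} (adjacency of $a,b$ plus a degree-two endpoint) and~\ref{bgpathdefinition}.\ref{bgpathdefinition3} ($N(a)=N(b)$ with both of degree two), and the final check that the residual graph is the $K_4$. The additional detail you supply on the link-tracking data structure is a reasonable elaboration of what the paper leaves implicit, so the approaches are the same.
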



\bibliographystyle{plain}
\bibliography{Schmidt}

\begin{thebibliography}{10}

\bibitem{Albroscheit2006}
S.~Albroscheit.
\newblock Ein {A}lgorithmus zur {K}onstruktion gegebener
  3-zusammenh\"{a}ngender {G}raphen.
\newblock Diploma thesis, FU Berlin, 2006.

\bibitem{Barnette1969}
D.~W. Barnette and B.~Gr\"unbaum.
\newblock On {S}teinitz's theorem concerning convex 3-polytopes and on some
  properties of 3-connected graphs.
\newblock {\em Many Facets of Graph Theory, Lecture Notes in Mathematics},
  110:27--40, 1969.

\bibitem{Blum1989}
M.~Blum and S.~Kannan.
\newblock Designing programs that check their work.
\newblock In {\em STOC '89}, pages 86--97, New York, 1989.

\bibitem{Garey1976}
M.~R. Garey, D.~S. Johnson, and R.~E. Tarjan.
\newblock The planar hamiltonian circuit problem is {NP}-complete.
\newblock {\em Siam J. Comp.}, 5(4):704--714, 1976.

\bibitem{Halin1969a}
R.~Halin.
\newblock Zur {T}heorie der n-fach zusammenh\"angenden {G}raphen.
\newblock {\em Abhandlungen aus dem Mathematischen Seminar der Universit\"at
  Hamburg}, 33(3):133--164, 1969.

\bibitem{Hopcroft1973}
J.~E. Hopcroft and R.~E. Tarjan.
\newblock Dividing a graph into triconnected components.
\newblock {\em SIAM J. Comput.}, 2(3):135--158, 1973.

\bibitem{Kelmans1978}
A.~K. Kelmans.
\newblock Graph expansion and reduction.
\newblock {\em Algebraic methods in graph theory, Szeged, Hungary}, 1:317--343,
  1978.

\bibitem{Menger1927}
K.~Menger.
\newblock Zur allgemeinen {K}urventheorie.
\newblock {\em Fund. Math.}, 10:96--115, 1927.

\bibitem{Nagamochi1992}
H.~Nagamochi and T.~Ibaraki.
\newblock A linear-time algorithm for finding a sparse k-connected spanning
  subgraph of a k-connected graph.
\newblock {\em Algorithmica}, 7(1-6):583--596, 1992.

\bibitem{Thomassen1981}
C.~Thomassen.
\newblock Kuratowski's theorem.
\newblock {\em Journal of Graph Theory}, 5(3):225--241, 1981.

\bibitem{Thomassen2006}
C.~Thomassen.
\newblock Reflections on graph theory.
\newblock {\em Journal of Graph Theory}, 10(3):309--324, 2006.

\bibitem{Titov1975}
V.~K. Titov.
\newblock {\em A constructive description of some classes of graphs}.
\newblock PhD thesis, Moscow, 1975.

\bibitem{Tutte1961}
W.~T. Tutte.
\newblock A theory of 3-connected graphs.
\newblock {\em Indag. Math.}, 23:441--455, 1961.

\bibitem{Tutte1966}
W.~T. Tutte.
\newblock Connectivity in graphs.
\newblock In {\em Mathematical Expositions}, volume~15. University of Toronto
  Press, 1966.

\bibitem{Vo1983}
K.-P. Vo.
\newblock Finding triconnected components of graphs.
\newblock {\em Linear and Multilinear Algebra}, 13:143--165, 1983.

\bibitem{Vo1983a}
K.-P. Vo.
\newblock Segment graphs, depth-first cycle bases, 3-connectivity, and
  planarity of graphs.
\newblock {\em Linear and Multilinear Algebra}, 13:119--141, 1983.

\bibitem{West2001}
D.~B. West.
\newblock {\em Introduction to Graph Theory}.
\newblock Prentice Hall, 2001.

\end{thebibliography}


\end{document}